\documentclass[conference,letterpaper]{IEEEtran}

\IEEEoverridecommandlockouts

\usepackage[utf8]{inputenc} 
\usepackage[T1]{fontenc}
\usepackage{url}
\usepackage{ifthen}
\usepackage{cite}
\usepackage{dsfont}
\usepackage[cmex10]{amsmath} 

\usepackage{physics}
\usepackage{amssymb}
\usepackage{amsthm}
\usepackage{xcolor}
\usepackage{ulem}
\usepackage{mathtools}
\usepackage{hyperref}
\usepackage{cleveref}
\usepackage{enumitem}

\newtheorem{ithm}{Theorem}
\newtheorem{theorem}{Theorem}
\newtheorem{definition}{Definition}

\newtheorem{lemma}[theorem]{Lemma}

\newtheorem{prop}[ithm]{Proposition}

\newcommand{\ve}{\varepsilon}

\newcommand{\mbb}{\mathbb}

\newcommand{\tinyspace}{\mspace{1mu}}

\renewcommand{\norm}[1]{\lVert\tinyspace #1 \tinyspace\rVert}

\newcommand{\cE}{\mathcal{E}}
\newcommand{\cF}{\mathcal{F}}

\newcommand{\cH}{\mathcal{H}}

\newcommand{\cM}{\mathcal{M}}

\newcommand{\contractsuppi}{\sup_{\substack{\rho \in \mathcal{D(H)} \\ 0 < \mathbb{D}_f(\rho \Vert \pi) < \infty}}}

\newcommand{\mixedterm}{\sigma^{-\frac{1}{2}} \rho \sigma^{-\frac{1}{2}}}

\newcommand{\Petz}{\mathrm{Petz}}

\begin{document}
\title{Tight Contraction Rates for Primitive Channels under Quantum $f$-Divergences\thanks{MST acknowledges support from the NUS REx Grant. MT and IG were supported by the Ministry of Education, Singapore, through grant T2EP20124-0005. IG is supported by the National Research Foundation, Singapore under the NRF Postdoctoral award. MT is also supported by the NRF Investigatorship award (NRF-NRFI10-2024-0006).}}


\author{
\IEEEauthorblockN{Matthew Simon Tan}
\IEEEauthorblockA{\textit{School of Computing} \\
\textit{National University of Singapore}\\
Singapore 117417, Singapore \\
e0726673@u.nus.edu} 
\and
\IEEEauthorblockN{Marco Tomamichel}
\IEEEauthorblockA{\textit{Dept.~of Electrical and Computer Engineering \&} \\
\textit{Centre for Quantum Technologies} \\
\textit{National University of Singapore}\\
Singapore 117543, Singapore}
\and
\IEEEauthorblockN{Ian George}
\IEEEauthorblockA{\textit{Centre for Quantum Technologies} \\
\textit{National University of Singapore}\\
Singapore 117543, Singapore \\
qit.george@gmail.com}
}

\maketitle


\begin{abstract}
  Data-processing inequalities capture the phenomenon that two probability distributions can only become less distinguishable under any common post-processing. For more fine-grained inequalities, one turns to strong data-processing inequality (SDPI) constants, which give the strongest inequalities for a given channel and reference state for a fixed measure of distinguishability. These quantities have been used to quantify the rate at which time-homogeneous Markov chains contract towards a fixed point both in the classical and quantum setting. In this work, we establish that quantum $f$-divergences satisfy a local reverse Pinsker inequality, which implies the asymptotic contraction rate of a primitive channel to its stationary state is upper bounded by the SDPI constant of any non-commutative $\chi^2$-divergence. Using quantum-detailed balance, we establish a sufficient condition for these bounds to be tight. Finally, we apply these results to Petz, Matsumoto, and Hirche-Tomamichel $f$-divergences, establishing new and strengthening previously known results.
\end{abstract}

\section{Introduction}
 In information theory the \textit{data-processing inequality} (DPI) of a divergence formalizes the idea that two states undergoing common post-processing can only become less distinguishable. In classical information theory, divergences that satisfy the data-processing inequality can be parameterized by convex functions. These are known as Csiszár $f$-divergences. Given a divergence $D_f$, a classical channel $W$, and an input probability distribution $p$, one can compute the corresponding strong data-processing inequality (SDPI) constant $\eta_f(W,p)$. The study of these  constants, initiated in \cite{ahlswede1976spreading}, give us the strongest possible data-processing inequalities we can obtain given only the knowledge of a fixed channel and reference state. 
 
 In quantum information theory, the non-commutativity of elements in the state space mean there are multiple distinct generalizations of $f$-divergences that each satisfy the DPI, see e.g.~\cite{hiai2011quantum,hiai2019quantum,hirche2024quantum}. Their SDPI constants have similarly been studied, see e.g.~\cite{lesniewski1999monotone, hiai2016contraction,hirche2022contraction, beigi2025some,george2025unified,iyer2025quantum} and references therein. While quantum $f$-divergences are perhaps initially defined merely in hopes of extending the DPI to quantum theory, they also can extend mathematical properties and operational interpretations of the classical $f$-divergences. Establishing such extensions allows us to both better understand the mathematics used in quantum information theory and to separate the relevance of different quantum $f$-divergences. Classically, the contraction rate of time-homogeneous Markov chains is upper-bounded by $\eta_{\chi^2}(W, p)$, the smallest SPDI constant \cite{makur2020comparison,george2024divergence}. Thus regardless of the chosen $f$-divergence, the contraction rate is "as fast as possible." In this work, we study quantum $f$-divergence SDPI constants and their applications to time-homogeneous Markov chains. In particular, we establish the following: 

 \begin{theorem}\label{thm:main-result}
    Let $f\in \mathcal{F}_{\mathrm{Pin}}, g \in \mathcal{M}_{\mathrm{st}}$, and $\mathcal{E}$ be a primitive channel with a fixed point $\pi >0$. Assume further $f$ is thrice continuously differentiable in an open neighborhood of $1$. Then we have 
    \begin{equation}
        \lim_{n \to \infty}\eta_{\, \mbb{D}_{f}}(\mathcal{E}^{\circ n}, \pi) ^{\frac{1}{n}}\leq  \eta_{\chi^2_{g}}(\mathcal{E},\pi)
    \end{equation}
    Moreover, if $\mathbb{D}_f$ is locally $\chi^{2}_{g}$, and $\mathcal{E}$ satisfies $g$-detailed balance with respect to $\pi$, then we have equality.
\end{theorem}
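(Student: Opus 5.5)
The argument has an upper-bound half and an equality half. The upper bound compares $\mathbb{D}_f$ with $\chi^2_g$ near $\pi$ and uses that primitivity drives every state to $\pi$ exponentially fast. For equality we need a lower bound on $\eta_{\mathbb{D}_f}(\mathcal{E}^{\circ n},\pi)$, which we get from states infinitesimally close to $\pi$ together with detailed balance.

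\textbf{Upper bound.} Write $\rho_n=\mathcal{E}^{\circ n}(\rho)$ and set $\eta:=\eta_{\chi^2_g}(\mathcal{E},\pi)$, assuming the usual convention that $\pi>0$ is the unique fixed point.
\begin{enumerate}[label=(\roman*)]
\item Apply the local reverse Pinsker inequality announced in the abstract. Since $f\in\mathcal{F}_{\mathrm{Pin}}$ and $g\in\mathcal{M}_{\mathrm{st}}$, it gives a constant $C$ and a neighbourhood of $\pi$ with $\mathbb{D}_f(\sigma\Vert\pi)\le C\,\chi^2_g(\sigma\Vert\pi)$ for all $\sigma$ in that neighbourhood. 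Here we use that all $\chi^2_g$ are mutually comparable up to constants depending on $\pi$, and that $f\in C^3$ near $1$ controls the Taylor remainder.
\item Primitivity gives $\rho_n\to\pi$ uniformly over $\rho$. So there is $N$ with $\rho_n$ in the neighbourhood for all $n\ge N$ and all $\rho$.
\item Iterating the SDPI for $\chi^2_g$ gives, for $n\ge N$,
\[
\mathbb{D}_f(\rho_n\Vert\pi)\le C\,\chi^2_g(\rho_n\Vert\pi)\le C\,\eta^{\,n-N}\chi^2_g(\rho_N\Vert\pi).
\]
\item To turn this into a bound on the ratio $\mathbb{D}_f(\rho_n\Vert\pi)/\mathbb{D}_f(\rho\Vert\pi)$ uniformly in $\rho$, I need a lower Pinsker-type bound. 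For $\rho$ far from $\pi$, the quantity $\mathbb{D}_f(\rho\Vert\pi)$ is bounded below by a positive constant, while $\chi^2_g(\rho_N\Vert\pi)$ is bounded because $\pi>0$.
\item For $\rho$ close to $\pi$, local $\chi^2$ behaviour gives $\chi^2_g(\rho_N\Vert\pi)\le\chi^2_g(\rho\Vert\pi)\le C'\mathbb{D}_f(\rho\Vert\pi)$. The second inequality is the lower local comparison, which I take to be part of the definition of $\mathcal{F}_{\mathrm{Pin}}$ (strong convexity near $1$).
\item Combining the two regimes gives $\eta_{\mathbb{D}_f}(\mathcal{E}^{\circ n},\pi)\le K\eta^{\,n-N}$. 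Taking $n$th roots, $\limsup_n \eta_{\mathbb{D}_f}(\mathcal{E}^{\circ n},\pi)^{1/n}\le\eta$.
\item Sub-multiplicativity of SDPI constants under composition makes $\eta_{\mathbb{D}_f}(\mathcal{E}^{\circ n},\pi)$ submultiplicative in $n$, so Fekete's lemma shows the limit exists.
\end{enumerate}

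\textbf{Lower bound under detailed balance.} The plan is to show $\eta_{\mathbb{D}_f}(\mathcal{E}^{\circ n},\pi)\ge\eta^n$, or at least $\ge c\,\eta^n$.
\begin{enumerate}[label=(\roman*)]
\item Take $\rho_\epsilon=\pi+\epsilon X$ with $X$ Hermitian and traceless. Since $\mathbb{D}_f$ is locally $\chi^2_g$,
\[
\mathbb{D}_f(\rho_\epsilon\Vert\pi)=\tfrac{f''(1)}{2}\,\epsilon^2\langle X,\Omega_\pi^{g}(X)\rangle+O(\epsilon^3),
\]
where $\Omega^g_\pi$ is the inverse of the operator mean defining $\chi^2_g$. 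The same expansion holds after applying $\mathcal{E}^{\circ n}$.
\item Letting $\epsilon\to0$ gives $\eta_{\mathbb{D}_f}(\mathcal{E}^{\circ n},\pi)\ge\eta_{\chi^2_g}(\mathcal{E}^{\circ n},\pi)$, which is the local $\chi^2$ ratio.
\item $g$-detailed balance means $\mathcal{E}$ is self-adjoint with respect to the inner product $\langle X,\Omega^g_\pi(Y)\rangle$. So $\eta_{\chi^2_g}(\mathcal{E}^{\circ n},\pi)$ is the squared operator norm of $\mathcal{E}^{\circ n}$ on traceless operators in this Hilbert space.
\item For a self-adjoint operator that norm equals $\|\mathcal{E}\|^{2n}=\eta^n$. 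Self-adjointness is what rules out the transient behaviour of non-normal operators.
\item Hence $\eta_{\mathbb{D}_f}(\mathcal{E}^{\circ n},\pi)^{1/n}\ge\eta$, which combined with the upper bound gives equality.
\end{enumerate}

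\textbf{Main obstacle.} The hardest step is making the upper bound uniform over states with $\mathbb{D}_f(\rho\Vert\pi)$ small but nonzero. This needs a two-sided local comparison of $\mathbb{D}_f$ with $\chi^2_g$ whose constants are uniform in the direction $X$, which is where the $C^3$ assumption and the remainder estimates enter. I would handle this by compactness of the set of traceless unit directions together with a uniform $O(\epsilon^3)$ Taylor remainder.
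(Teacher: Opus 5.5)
\textbf{The gap is in step (i) of your upper bound.}

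This step carries the main analytic content of the theorem. You take the local reverse Pinsker inequality as given. In your last paragraph you plan to prove it from a uniform $O(\epsilon^3)$ expansion of $\mathbb{D}_f(\pi+\epsilon X\Vert\pi)$ plus compactness of directions. No such expansion is available:
\begin{itemize}
\item $\mathbb{D}_f$ is only assumed to satisfy DPI and classical consistency.
\item The upper bound must hold for every $g\in\mathcal{M}_{\mathrm{st}}$, with no locality hypothesis at all.
\item Even ``locally $\chi^2_g$'' is only a pointwise limit along each ray, with no remainder rate and no uniformity in $X$. So compactness has nothing to act on.
\end{itemize}
The smoothness hypothesis is on the scalar function $f$. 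It becomes usable only after $\mathbb{D}_f$ has been bounded by an expression in $f$ at scalar arguments.

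\textbf{The missing idea is that reduction via DPI.} For $\sigma\neq\pi$, let $m<1<M$ be the extreme eigenvalues of $\pi^{-1/2}\sigma\pi^{-1/2}$. Then $m\pi\le\sigma\le M\pi$. Hence $\tau_1=(\sigma-m\pi)/(1-m)$ and $\tau_2=(M\pi-\sigma)/(M-1)$ are states. Under the classical-to-quantum channel sending the $i$-th basis state to $\tau_i$, $\sigma$ and $\pi$ are the images of two binary distributions with likelihood ratios $M$ and $m$. DPI and classical consistency then give
\[
\mathbb{D}_f(\sigma\Vert\pi)\le\frac{(1-m)f(M)+(M-1)f(m)}{M-m}\le c_f\,(1-m)(M-1).
\]
The second step uses $f(x)\le c_f(x-1)^2$ near $1$, which follows from Taylor's theorem and $f(1)=f'(1)=0$. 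Since $\max\{1-m,M-1\}\le\norm{\sigma-\pi}_\infty/\lambda_{\min}(\pi)$, you get $\mathbb{D}_f(\sigma\Vert\pi)\lesssim_{f,\pi}\norm{\sigma-\pi}_1^2\lesssim_{g,\pi}\chi^2_g(\sigma\Vert\pi)$. This holds once $\sigma$ is close enough to $\pi$ that $[m,M]$ stays inside the neighbourhood where $f$ is smooth. The paper does essentially this, but quotes the slightly sharper reverse Pinsker bound of Lanier et al. That bound has $\tfrac12\norm{\sigma-\pi}_1$ in place of $\tfrac{(1-m)(M-1)}{M-m}$.

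\textbf{Smaller points.} No two-sided \emph{local} comparison is needed, because the lower comparison is global. $\mathcal{F}_{\mathrm{Pin}}$ is defined by a global classical Pinsker inequality, not by strong convexity near $1$. It lifts to $\mathbb{D}_f(\rho\Vert\pi)\ge C_f\norm{\rho-\pi}_1^2$ by measuring in the eigenbasis of $\rho-\pi$. Combined with $\chi^2_g(\rho\Vert\pi)\lesssim_{g,\pi}\norm{\rho-\pi}_1^2$ (valid because $\pi>0$), this collapses your steps (iv)--(v) into a single line.

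With (i) repaired, your upper bound is the paper's argument. Your Fekete remark usefully supplies the existence of the limit, which the paper takes for granted.

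Your equality half is correct and follows the paper's structure. Only the limit in the definition of ``locally $\chi^2_g$'' is needed there, not an $O(\epsilon^3)$ remainder. Replacing the paper's $\lambda_2$ characterization by $\norm{A^n}=\norm{A}^n$ is a clean alternative. Here $A$ is the restriction of $\mathcal{E}$ to traceless operators, which under $g$-detailed balance is self-adjoint for $\langle\cdot,\Omega^g_\pi(\cdot)\rangle$.
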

\noindent We stress the appeal of this theorem is that one would expect most quantum $f$-divergences to be locally some $\chi^{2}_{g}$ as this is true in the commutative setting \cite{polyanskiy2025information}, and so this theorem can be applied as soon as the second-order behavior of the divergence is determined. We present examples of this in Section \ref{sec:applications}.

The special case of Theorem \ref{thm:main-result} has been established classically in \cite{makur2020comparison,george2024divergence} for irreducible, aperiodic time-homogeneous Markov chains. Analogous results have been established for the Hirche-Tomamichel (HT) $f$-divergences in \cite{george2025unified}, and the Petz $f$-divergences in \cite{george2024divergence}. In all preceding quantum cases, tightness was only established in the case where the input states commute, arising from the classical tightness condition in \cite{makur2020comparison}. We provide proofs in terms of general quantum $f$-divergence, and thereby recover the aforementioned results as special cases. Using the family of quantum detailed-balance conditions defined in \cite{temme2010chi}, we also offer a sufficient condition for tightness that extends to the non-commutative setting. Applying these results to Petz, Matsumoto, and Hirche-Tomamichel $f$-divergences, establishing new and strengthening previously known results.

 The rest of the proceeding proves Theorem \ref{thm:main-result}. Omitted proofs may be found in \cite{Tan-2026}.

\section{Preliminaries}
Let $\mathcal{H}$ denote a finite-dimensional complex Hilbert space of fixed dimension. We let $\mathcal{D}(\mathcal{H})$ denote the set of density matrices on this Hilbert space, which are positive semi-definite operators of trace $1$. Quantum channels are maps $\mathcal{E}:\mathcal{D}(\mathcal{H}) \to \mathcal{D}(\mathcal{H})$, which are completely positive and trace-preserving. Each of these maps can naturally be extended uniquely by linearity to a linear operator on the set of bounded operators on $\mathcal{H}$, denoted by $\mathcal{B}(\mathcal{H})$. Given a channel $\cE$, we define $\cE^{\circ n} = \bigcirc_{i \in [n]} \cE$. To avoid issues with support, we assume $\rho, \sigma$ are full rank for simplicity when it is not specified. $\norm{\cdot }_{p}$ for $p \in [1,\infty]$ denote the Schatten $p$-norms. When not specified, all inner products are the Hilbert-Schmidt inner product. Finally, we use $x \lesssim_{f,g,\pi} y$ to denote there exists some constant $c>0$ depending only on $f,g,\pi$ such that $x \leq c
\cdot y$. Note we allow this constant to be dimension dependent as well, and when there is no subscript, this constant depends only on the dimension of the Hilbert space. These inequalities also compose, with the dependency of constants accumulating, so if $A \lesssim_{f} B$ and $B \lesssim_{g} C$ we have $A \lesssim_{f,g} C$. We begin by reviewing several important operators and classes of functions that will be used throughout this proceeding. A more detailed introduction of the quantities discussed can be found in \cite{george2025unified}.

We begin by introducing quantum $f$-divergences generally as distinguishability measures that generalize the Csiszár $f$-divergences \cite{petz1985quasi, hirche2024quantum,matsumoto2015new}. 
\begin{definition}
    Let $f:(0,\infty) \to (0,\infty)$ be twice continuously differentiable convex function with $f(1) = f'(1) =0$.  We say a map $\mathbb{D}_f: \mathcal{D(H)} \times \mathcal{D(H)} \to \mathbb{R}_{\geq 0}$ is a quantum $f$-divergence if it satisfies the following properties: 
    \begin{enumerate}
        \item (DPI) $\mathbb{D}_f(\mathcal{E}(\rho) \Vert \mathcal{E}(\sigma)) \leq \mathbb{D}_f(\rho \Vert \sigma)$ for all $\rho,\sigma \in \mathcal{D(H)}$ and $\mathcal{E}$ that is CPTP. 
        \item (Classical Consistency) $\mathbb{D}_f(\rho \Vert \sigma) = D_f(p \Vert q)$ if $[\rho,\sigma] =0$, where $p,q$ are the classical probability distributions given by the spectrum of $\rho,\sigma$ respectively. 
    \end{enumerate}
    The family of functions $f$ that satisfy the above conditions are denoted with $\mathcal{F}$.
\end{definition}

\begin{definition}
    We say $f \in \cF_{\mathrm{Pin}}$ if $f \in \cF$ and there exists some constant $C_f>0$ depending only on dimension and $f$ such that $D_f(p \Vert  q) \geq C_f \norm{p - q}_1^2$ for all probability distributions $p$ and $q$, i.e. satisfies a Pinsker inequality. By Corollary 62 of \cite{george2024divergence}, if $f \in \cF_{\mathrm{Pin}}$, then for all $\rho,\sigma \in \mathcal{D(H)}$,
    \begin{equation}
        \mathbb{D}_f(\rho  \Vert  \sigma ) \geq C_f \norm{\rho - \sigma}_1^2 \ . 
    \end{equation}
\end{definition}

Strong data-processing inequality (SDPI) constants can then be defined in this more general framework as follows:

\begin{definition}[SDPI Constant]
    Fix an $f \in \mathcal{F}$. For a CPTP map $\mathcal{E}$ and a state $\sigma \in \mathcal{D(H)}$, we define the SDPI constant of a channel $\mathcal{E}$ with respect to a state $\sigma$ by 
    \begin{equation}
        \eta_{\mbb{D}_{f}}(\mathcal{E}, \sigma) = \sup_{\substack{\rho \in \mathcal{D(H)} \\ 0 < \mathbb{D}_f(\rho\Vert \sigma) < \infty}} \frac{\mathbb{D}_f(\mathcal{E}(\rho)\Vert \mathcal{E}(\sigma))}{\mathbb{D}_f(\rho\Vert \sigma)} \ . 
    \end{equation}
     When we consider the SDPI constant of a specific $\chi^2_g$-divergence, we will use $\eta_{\chi^2_g}$. 
\end{definition}

\begin{definition}
    Let $\mathcal{H}$ be a Hilbert space, and $P,Q \geq  0$. The relative modular operator is the super-operator $\Delta_{P,Q}: \mathcal{B(H)} \to \mathcal{B(H)}$ is defined by 
    \begin{equation}
        \Delta_{P,Q} = L_{P}R_{Q^{-1}}
    \end{equation}
    where $L,R$ are the left and right multiplication super-operators respectively defined by $L_P(X) = PX$ and $R_{Q}(X) = XQ$. 
\end{definition}

\begin{definition}
    A function $f:\mathbb{R}_{>0} \to \mathbb{R}$ is 
    \begin{enumerate}
        \item (Operator Monotone Decreasing) if  $f(\rho) \leq f(\sigma)$ for all PSD $\rho,\sigma$ such that $\rho \leq \sigma$. 
        \item (Normalized) if  $f(1)=1$. 
        \item (Symmetry-Inducing) if $xf(x^{-1}) = f(x)$ for $x \in \mathbb{R}_{>0}$. 
    \end{enumerate}
    We say that $f$ is standard monotone if it satisfies properties $1,2$ and $3$, and denote this family of functions by $\mathcal{M}_{\mathrm{st}}$. 
\end{definition} 

Note that operator continuity is sufficient to imply continuity (one can see this by considering the integral decomposition of an operator monotone function), and the above three conditions imply $g(x)>0$ for $x>0$. This implication will see heavy use in the subsequent proofs. These standard monotone functions define a family of non-commutative generalizations of $\chi^2$-divergences, as identified in \cite{temme2010chi}. This is done by using the family of standard monotone functions $\mathcal{M}_{st}$ to parameterize the various ways one can perform non-commutative division by $\sigma$. 

\begin{definition}
    Let $0<\sigma \in \mathcal{D(H)}$ and $f\in \mathcal{M}_{\mathrm{st}}$. Then 
    \begin{equation}
    \Omega_{\sigma}^{f} := R^{-1}_{\sigma}f(\Delta_{\sigma,\sigma}),
\end{equation}
is a quantum inversion of $\sigma$. 
\end{definition}

Each inversion gives rise to a quantum $\chi^2$-divergence in the following manner.

\begin{definition}\cite{temme2010chi}
    Let $f \in \mathcal{M}_{\mathrm{st}}$ and $\rho,\sigma \in \mathcal{D(H)}$. Then the quantum $\chi^2$-divergence defined by $f,\sigma$ is the following map from $\mathcal{D(H)} \times \mathcal{D(H)} \to \mathbb{R}_{\geq 0}$:
    \begin{equation}
      \chi^2_f(\rho\Vert \sigma) = \langle \rho -\sigma, \Omega_{\sigma}^{f}(\rho-\sigma)\rangle \ . 
    \end{equation}
\end{definition}

An important example of the quantum $\chi^2$-divergences is the maximal $\chi^{2}$-divergence:
\begin{equation}
    \chi^2_{\max}(\rho \Vert \sigma) = \Tr[\sigma^{-1}(\rho-\sigma)^2] \ . 
\end{equation}
The name is justified by the known fact \cite{hiai2016contraction} that for all $f \in \cM_{st}$,
\begin{equation}
    \chi^2_{f} \leq \chi^{2}_{\max} \ .
\end{equation}

\section{General contraction rate upper bounds}
In this section, we prove the upper bounds given in Theorem \ref{thm:main-result}. An analogue of time-homogeneous, irreducible aperiodic Markov chains is the following: 
\begin{definition}
    Let $\mathcal{E}$ be a quantum channel. We say that $\mathcal{E}$ is primitive if there exists $n \in \mathbb{N}$ for which 
    \begin{equation}
        \mathcal{E}^{\circ n}(\rho) >0 \quad \forall \rho \in \mathcal{D(H)} \ . 
    \end{equation}
   This condition is equivalent to $\mathcal{E}$ having a unique full rank fixed-point (see \cite[Theorem 6.7]{wolf2012quantum}). That is, $\mathcal{E}^{\circ n}(\rho) \to \pi$ for some $\pi>0$ as $n \to \infty$, for all $\rho \in \mathcal{D(H)}$.
\end{definition}

A crucial fact we require is that this convergence is uniform in $\rho$. 

\begin{lemma}[Uniform convergence in $\rho$.]\label{lemma:uniform}
    Let $\mathcal{E}$ be a primitive channel with full rank stationary distribution $\pi$. Then for any $\varepsilon>0$, there exists $N_{\varepsilon} \in \mathbb{N}$ such that for all $n \geq N_{\varepsilon}$, we have 
    \begin{equation}
        \norm{\mathcal{E}^{\circ n}(\rho)- \pi}_{\infty} < \varepsilon
    \end{equation} for all $\rho \in \mathcal{D(H)}$. 
\end{lemma}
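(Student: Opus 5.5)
The plan is to reduce uniform convergence to convergence of the channel powers as linear maps on the finite-dimensional space $\mathcal{B(H)}$, and then exploit linearity together with compactness of $\mathcal{D(H)}$. Write the limiting map $\mathcal{P}(X) := \Tr[X]\,\pi$; this is a linear, trace-preserving map with $\mathcal{P}(\rho)=\pi$ for every $\rho \in \mathcal{D(H)}$. The goal is to show $\mathcal{E}^{\circ n} \to \mathcal{P}$ in any operator norm on $\mathcal{B(H)}$. The conclusion then follows from the estimate
\begin{equation}
    \norm{\mathcal{E}^{\circ n}(\rho) - \pi}_\infty = \norm{(\mathcal{E}^{\circ n}-\mathcal{P})(\rho)}_\infty \leq \norm{\mathcal{E}^{\circ n}-\mathcal{P}}_{1\to\infty}\,\norm{\rho}_1 = \norm{\mathcal{E}^{\circ n}-\mathcal{P}}_{1\to\infty} \ ,
\end{equation}
since the right-hand side is independent of $\rho$.

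To get operator convergence, I would first establish pointwise convergence on a basis and then pass to linear combinations. The definition of primitivity states that $\mathcal{E}^{\circ n}(\rho)\to\pi$ for every $\rho \in \mathcal{D(H)}$, via the equivalence cited from \cite[Theorem 6.7]{wolf2012quantum}. Every $X \in \mathcal{B(H)}$ decomposes as $X = \sum_{k=1}^{4} c_k \rho_k$ with $\rho_k \in \mathcal{D(H)}$: split $X$ into Hermitian and anti-Hermitian parts, then take the positive and negative parts of each and normalize. By linearity,
\begin{equation}
    \mathcal{E}^{\circ n}(X) \to \sum_k c_k \pi = \Tr[X]\,\pi = \mathcal{P}(X) \ .
\end{equation}
Applying this to each element of a fixed basis $\{X_j\}$ of $\mathcal{B(H)}$ shows that the matrix entries of $\mathcal{E}^{\circ n}$ converge to those of $\mathcal{P}$. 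In finite dimensions, entrywise convergence of matrices is equivalent to convergence in every norm, so $\norm{\mathcal{E}^{\circ n}-\mathcal{P}}_{1\to\infty}\to 0$. Choosing $N_\varepsilon$ with $\norm{\mathcal{E}^{\circ n}-\mathcal{P}}_{1\to\infty}<\varepsilon$ for all $n\ge N_\varepsilon$ completes the argument.

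Another route is a spectral argument. Primitivity means $1$ is a simple eigenvalue and all other eigenvalues of $\mathcal{E}$ lie strictly inside the unit disk, so a Jordan decomposition gives $\norm{\mathcal{E}^{\circ n}-\mathcal{P}} \lesssim n^{d} r^n$ for some $r<1$.

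The only step requiring care is the claim that $\mathcal{E}^{\circ n}(\rho)\to\pi$ for all states. This is precisely what the paper takes from \cite{wolf2012quantum}, so I will simply invoke it; nothing else is a genuine obstacle.
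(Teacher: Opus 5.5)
Your proof is correct, but it takes a different route from the paper's.

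\textbf{Your route.} You take the pointwise convergence $\mathcal{E}^{\circ n}(\rho)\to\pi$ as input; the paper asserts it in its definition of primitivity, citing Wolf. You then upgrade it to uniform convergence using finite-dimensional linear algebra:
\begin{itemize}
\item states span $\mathcal{B(H)}$, so pointwise convergence on states gives $\mathcal{E}^{\circ n}\to\mathcal{P}$ as linear maps;
\item in finite dimensions this is convergence in the $1\to\infty$ norm;
\item $\norm{\rho}_1=1$ then makes the bound independent of $\rho$.
\end{itemize}

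\textbf{The paper's route.} The paper proves the statement quantitatively and does not use pointwise convergence as a hypothesis; it only needs $\pi$ to be a fixed point. It bounds
\[
\norm{\mathcal{E}^{\circ n}(\rho-\pi)}_\infty\le\tfrac12\norm{\mathcal{E}^{\circ n}(\rho-\pi)}_1\le\eta_{\text{TD}}(\mathcal{E}^{\circ n})\le 1-\alpha(\mathcal{E}^{\circ n}) .
\]
It then argues that for $n\ge d^4$ the Choi operator of $\mathcal{E}^{\circ n}$ is full rank. This makes the quantum Doeblin coefficient $\alpha(\mathcal{E}^{\circ n})$ strictly positive, so $\eta_{\text{TD}}(\mathcal{E}^{\circ n})<1$. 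Iterating via submultiplicativity gives geometric decay, uniform in $\rho$.

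\textbf{Comparison.} The paper's approach gives an explicit exponential rate, at the cost of importing the Doeblin-coefficient and Wielandt-type machinery. Yours is shorter and fully elementary but gives no rate. That costs nothing here: the lemma is only used, in the proof of Theorem~\ref{thm:upper-bound}, to choose some $N$ with $\norm{\mathcal{E}^{\circ n}(\rho)-\pi}_\infty<\lambda_{\min}(\pi)/2$ for all $\rho$.

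\textbf{Your spectral alternative.} It would also give a rate, governed by the second-largest eigenvalue modulus. If your $d$ denotes $\dim\mathcal{H}$, note that the Jordan blocks live on $\mathcal{B(H)}$, which has dimension $d^2$. The polynomial prefactor should then be of order $n^{d^2}$ rather than $n^{d}$. This does not affect the conclusion.
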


To bound the rate of contraction measured under quantum $f$-divergences by $\chi^{2}_{g}$ SDPI constants, we will need to establish quantum $f$-divergences satisfy a local reverse Pinsker inequality.

To establish quantum $f$-divergences satisfy this property, we use the following two technical lemmata. 
\begin{lemma}\label{lemma:eigenvalues} 
Let $\rho,\sigma \in \mathcal{D(H)}$, and $\sigma>0$. Then 

\begin{equation}
    \norm{\sigma^{-\frac{1}{2}} \rho \sigma^{-\frac{1}{2}} - \mathds{1}}_{\infty}  \leq \frac{1}{\lambda_{\min}(\sigma)} \norm{\rho-\sigma}_{\infty} \ .
\end{equation}
\end{lemma}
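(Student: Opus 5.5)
The plan is to factor the difference as a congruence and then use submultiplicativity of the operator norm. Since $\sigma>0$, the operator $\sigma^{-\frac{1}{2}}$ is well defined and Hermitian. Writing $\mathds{1} = \sigma^{-\frac{1}{2}}\sigma\sigma^{-\frac{1}{2}}$ gives
\begin{equation}
    \sigma^{-\frac{1}{2}} \rho \sigma^{-\frac{1}{2}} - \mathds{1} = \sigma^{-\frac{1}{2}} (\rho-\sigma) \sigma^{-\frac{1}{2}} \ .
\end{equation}
This identity is the only structural input. The rest is norm bookkeeping.

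Next, apply submultiplicativity of the Schatten $\infty$-norm, $\norm{ABC}_\infty \leq \norm{A}_\infty\norm{B}_\infty\norm{C}_\infty$, to the right-hand side. This bounds it by $\norm{\sigma^{-\frac{1}{2}}}_\infty^2 \norm{\rho-\sigma}_\infty$. Because $\sigma$ is positive definite, $\sigma^{-\frac{1}{2}}$ has eigenvalues $\lambda_i(\sigma)^{-\frac{1}{2}}$. Its operator norm is therefore the largest of these, namely $\lambda_{\min}(\sigma)^{-\frac{1}{2}}$, and squaring yields $1/\lambda_{\min}(\sigma)$, which is the claimed inequality.

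Nothing here is a genuine obstacle. The only points needing care are that $\sigma>0$ guarantees $\lambda_{\min}(\sigma)>0$, so the inverse square root exists, and that the identity for $\mathds{1}$ uses the functional calculus of $\sigma$. Note that the hypothesis that $\rho$ is a density matrix is never used: the bound holds for any Hermitian $\rho$.
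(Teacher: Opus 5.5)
Your proof is correct and follows the paper's argument. Both rewrite $\sigma^{-\frac{1}{2}}\rho\sigma^{-\frac{1}{2}} - \mathds{1}$ as $\sigma^{-\frac{1}{2}}(\rho-\sigma)\sigma^{-\frac{1}{2}}$ (the paper writes this step as ``$\leq$'', though it is an equality as you note), then apply submultiplicativity of the operator norm together with $\norm{\sigma^{-\frac{1}{2}}}_\infty^2 = 1/\lambda_{\min}(\sigma)$.
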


\begin{lemma}\cite{lanier2025classicalquantumexplicitclassical}\label{theorem:generalreversepinsker}
    For any $\rho, \sigma \in \mathcal{D(H)}$ satisfying the condition
\begin{equation}
|\rho - \sigma| \le \rho + \sigma,
\end{equation}
we have
\begin{equation}\label{eq:q-rev-pinsker}
\mathbb{D}_f(\rho \| \sigma) \leq \frac{\|\rho - \sigma\|_1}{2} \left( \frac{f(m)}{1 - m} + \frac{f(M)}{M - 1} \right).
\end{equation}
where $m = \lambda_{\min}(\sigma^{-\frac{1}{2}} \rho \sigma^{-\frac{1}{2}})$ and $M = \lambda_{\max}(\sigma^{-\frac{1}{2}} \rho \sigma^{-\frac{1}{2}})$ are the smallest and largest eigenvalues of $\sigma^{-\frac{1}{2}} \rho \sigma^{-\frac{1}{2}}$, respectively.
\end{lemma}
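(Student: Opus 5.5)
The plan is to reduce the quantum statement to the classical reverse Pinsker inequality. The only tools available for a general $\mathbb{D}_f$ are (DPI) and classical consistency. A measurement would push the inequality the wrong way, so instead I would find a classical pair $(p,q)$ on a finite alphabet and states $\tau_i\in\mathcal{D(H)}$ such that $\rho=\sum_i p_i\tau_i$ and $\sigma=\sum_i q_i\tau_i$. Applying DPI to the classical-to-quantum preparation channel $e_i\mapsto\tau_i$ then gives $\mathbb{D}_f(\rho\|\sigma)\le D_f(p\|q)$.

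For the classical bound I use the convexity argument. On $[m,1]$ and $[1,M]$, convexity together with $f(1)=0$ gives
\begin{equation*}
f(x)\le \frac{f(M)}{M-1}(x-1)_+ + \frac{f(m)}{1-m}(1-x)_+ .
\end{equation*}
Hence for any $(p,q)$ whose likelihood ratios $p_i/q_i$ lie in $[m,M]$,
\begin{equation*}
D_f(p\|q)\le \mathrm{TV}(p,q)\left(\frac{f(m)}{1-m}+\frac{f(M)}{M-1}\right),
\end{equation*}
using $\sum_i q_i(p_i/q_i-1)_\pm=\mathrm{TV}(p,q)$. So it suffices to build a decomposition that meets two requirements: (i) all ratios $p_i/q_i$ lie in $[m,M]$, and (ii) $\mathrm{TV}(p,q)=\tfrac12\norm{\rho-\sigma}_1$. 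Requirement (ii) must be exact, because DPI for the trace distance only gives $\mathrm{TV}(p,q)\ge\tfrac12\norm{\rho-\sigma}_1$.

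The naive two-point decomposition uses $\omega_1\propto \rho-m\sigma\ge0$ and $\omega_2\propto M\sigma-\rho\ge0$, both positive by the definition of $m$ and $M$. It satisfies (i), but its total variation $(1-m)(M-1)/(M-m)$ is in general larger than $\tfrac12\norm{\rho-\sigma}_1$. This is exactly where the hypothesis $|\rho-\sigma|\le\rho+\sigma$ has to enter. Write the Jordan decomposition $\rho-\sigma=P-N$ with $P,N\ge0$ and $PN=0$, so that $\operatorname{Tr}P=\operatorname{Tr}N=\delta:=\tfrac12\norm{\rho-\sigma}_1$. Since $|\rho-\sigma|=P+N$, the hypothesis is equivalent to $N\le\sigma$, and then also $P\le\rho$. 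I would therefore try a three-piece decomposition: a common part $\sigma-N=\rho-P\ge0$, which contributes ratio $1$, together with pieces carrying the excess $P$ and the deficit $N$. The supports of $P$ and $N$ are orthogonal, and this is what makes the total variation come out exactly $\delta$.

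The main obstacle is satisfying (i) and (ii) simultaneously. Pieces built purely from $P$ and $N$ would have ratios $\infty$ and $0$. So I would mix: split $\sigma-N$ into two parts and attach a fraction of it to $P$ and to $N$ respectively, giving pieces $\alpha(\sigma-N)+P$ and $\beta(\sigma-N)+N$. The weights $\alpha,\beta$ should be chosen so that these pieces have ratios exactly $M$ and $m$, interpolating between the Jordan and the $(m,M)$ decompositions. I then need to check positivity of the remaining common part, using $N\le\sigma$, $P\le\rho$, $\rho\le M\sigma$ and $\rho\ge m\sigma$. If an exact three-piece choice fails, the fallback is a finer spectral decomposition relative to $\sigma$, i.e.\ pieces built from $\sigma^{1/2}E_k\sigma^{1/2}$ for spectral projections $E_k$ of $\sigma^{-1/2}\rho\sigma^{-1/2}$. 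That would secure (i) automatically, and the work would then be to control the total variation using $N\le\sigma$.
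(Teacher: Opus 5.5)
The decisive step fails. No decomposition can meet (i) and (ii) at once in general, and the statement itself is false for a general $\mathbb{D}_f$.

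\textbf{Why the three-piece construction fails.} Your setup is correct: the preparation channel with DPI and classical consistency, the chord bound, and the equivalence of the hypothesis with $N\le\sigma$. But the three-piece construction does not move the ratios at all. The ratio of a piece is $p_i/q_i$, the weights that $\rho$ and $\sigma$ place on it; it is not a property of what the piece contains. Take $C=\sigma-N$ and pieces proportional to $\alpha C+P$, $\beta C+N$ and $C$. Whenever $C,P,N$ are linearly independent (the generic case), $\sigma=C+N$ can put no weight on the first piece and $\rho=C+P$ none on the second. So the ratios stay $\infty$ and $0$, and you get at best $\delta\,(f(0)+\lim_{x\to\infty}f(x)/x)$. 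That weaker bound is exactly what $N\le\sigma$ buys.

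\textbf{The obstruction is structural.} The Jordan decomposition is the unique splitting $\rho-\sigma=A-B$ with $A,B\ge0$ and $\operatorname{Tr}A=\delta$. Hence (ii) forces $\sum_{p_i>q_i}(p_i-q_i)\tau_i=P$. Then (i), i.e.\ $q_i\ge(p_i-q_i)/(M-1)$ on those indices, forces $\sigma\ge\sum_{p_i>q_i}q_i\tau_i\ge P/(M-1)$. Neither $N\le\sigma$ nor $\rho\le M\sigma$ implies this; the latter only gives $P-N\le(M-1)\sigma$.

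\textbf{Counterexample.} The paper gives no proof to compare with, since the lemma is imported from the cited work. Read for arbitrary $\mathbb{D}_f$, it is false. Take $f(x)=(x-1)^2$ and the Matsumoto (equivalently Petz) divergence $\operatorname{Tr}[(\rho-\sigma)\sigma^{-1}(\rho-\sigma)]$, i.e.\ $\chi^2_{\max}$. Let $\sigma=\operatorname{diag}(\frac{4}{5},\frac{1}{5})$ and $\rho=\sigma+\frac{1}{10}\bigl(\begin{smallmatrix}0&1\\1&0\end{smallmatrix}\bigr)$.
Then $|\rho-\sigma|=\frac{1}{10}\mathds{1}\le\rho+\sigma$, because the difference has positive diagonal and determinant $0.44$. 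Also $\delta=\frac{1}{10}$, and $\sigma^{-1/2}\rho\sigma^{-1/2}$ has eigenvalues $M=\frac{5}{4}$ and $m=\frac{3}{4}$. The right-hand side is $\frac{1}{10}(\frac14+\frac14)=\frac{1}{20}$, while the left-hand side is $\frac{1}{100}(\frac54+5)=\frac{1}{16}$. Consistently with the obstruction, $P/(M-1)=\frac25|e_+\rangle\langle e_+|\not\le\sigma$, since $1/\langle e_+|\sigma^{-1}|e_+\rangle=\frac{8}{25}$.

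\textbf{What your fallback does prove.} Using Matsumoto's reverse test $\tau_k\propto\sigma^{1/2}E_k\sigma^{1/2}$ and your chord bound, you get, for every $\mathbb{D}_f$ and without the hypothesis,
\[
\mathbb{D}_f(\rho\Vert\sigma)\le\operatorname{Tr}[\sigma(\Gamma-\mathds{1})_+]\Bigl(\frac{f(m)}{1-m}+\frac{f(M)}{M-1}\Bigr),\qquad \Gamma=\sigma^{-\frac{1}{2}}\rho\,\sigma^{-\frac{1}{2}}.
\]
The prefactor is at least $\delta$. It is strictly larger for every noncommuting qubit pair: it equals $\frac18$ above, where this bound is attained. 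So it cannot be controlled down to $\delta$. It is, however, at most $\norm{\Gamma-\mathds{1}}_\infty\le\norm{\rho-\sigma}_\infty/\lambda_{\min}(\sigma)$, which is all that Proposition~\ref{prop:revpinsker} needs.

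The $\delta$-version does hold for particular divergences, even without the hypothesis. For the measured divergence, measured distributions have ratios in $[m,M]$ and total variation at most $\delta$. For the HT divergence, $\gamma\mapsto E_\gamma(\rho\Vert\sigma)$ is convex with $E_1=\delta$ and $E_M=0$. It does not follow from DPI and classical consistency alone.
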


We now establish the wanted local reverse Pinsker inequality property.
\begin{prop}\label{prop:revpinsker}
    Let $\rho,\sigma \in \mathcal{D(H)}$ with $\sigma >0$. \ If $f \in \mathcal{F}$, and there is some $\varepsilon \in (0,1)$ for which $f$ is thrice continuously differentiable on $(1-\varepsilon,1+\varepsilon)$, and $\norm{\rho - \sigma}_{\infty} < \frac{\varepsilon \cdot \lambda_{\min}(\sigma)}{2}$, then 
    \begin{equation}
        \mathbb{D}_f(\rho  \Vert  \sigma) \lesssim_{f,\sigma}  \norm{\rho - \sigma}^{2}_{1} 
    \end{equation}
    where $m = \lambda_{\min}(\sigma^{-\frac{1}{2}} \rho \sigma^{-\frac{1}{2}})$ and $M = \lambda_{\max}(\sigma^{-\frac{1}{2}} \rho \sigma^{-\frac{1}{2}})$
\end{prop}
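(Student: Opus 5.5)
The plan is to apply Lemma~\ref{theorem:generalreversepinsker} and then control the bracketed factor $\frac{f(m)}{1-m}+\frac{f(M)}{M-1}$ by a constant multiple of $\norm{\rho-\sigma}_1$. This turns the linear bound into a quadratic one.

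First I verify the hypothesis $|\rho-\sigma|\le\rho+\sigma$. By assumption,
\[
\norm{\rho-\sigma}_\infty<\tfrac{\varepsilon\lambda_{\min}(\sigma)}{2}<\tfrac{\lambda_{\min}(\sigma)}{2}.
\]
This gives $\rho\ge\sigma-\norm{\rho-\sigma}_\infty\mathds{1}\ge \tfrac{1}{2}\lambda_{\min}(\sigma)\mathds{1}>0$, so
\[
\rho+\sigma\ge \tfrac{3}{2}\lambda_{\min}(\sigma)\mathds{1}\ge \norm{\rho-\sigma}_\infty\mathds{1}\ge|\rho-\sigma| .
\]
Hence Lemma~\ref{theorem:generalreversepinsker} applies. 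Next, Lemma~\ref{lemma:eigenvalues} gives $\norm{\mixedterm-\mathds{1}}_\infty\le \norm{\rho-\sigma}_\infty/\lambda_{\min}(\sigma)<\varepsilon/2$. Therefore
\[
1-\varepsilon/2< m\le M<1+\varepsilon/2 .
\]
Both $m$ and $M$ lie in a compact subinterval $I=[1-\varepsilon/2,1+\varepsilon/2]$ of the region where $f$ is $C^3$.

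If $m=1$ then $\Tr\rho=\Tr\sigma$ forces $M=1$ and $\rho=\sigma$, which is trivial. So assume $m<1<M$; this holds because $\Tr[\sigma(\mixedterm-\mathds{1})]=0$ with $\sigma>0$. Taylor expansion with $f(1)=f'(1)=0$ gives $f(x)=\tfrac{1}{2}f''(\xi)(x-1)^2$ for some $\xi$ between $1$ and $x$. Thus
\[
\frac{f(m)}{1-m}\le \tfrac12\sup_I|f''|\,(1-m),\qquad \frac{f(M)}{M-1}\le \tfrac12\sup_I |f''|\,(M-1).
\]
(Only $C^2$ on $I$ is actually needed here; the $C^3$ hypothesis is more than enough and makes $\sup_I|f''|$ finite by continuity on a compact set.) Both $1-m$ and $M-1$ are at most $\norm{\mixedterm-\mathds{1}}_\infty\le \norm{\rho-\sigma}_\infty/\lambda_{\min}(\sigma)\le\norm{\rho-\sigma}_1/\lambda_{\min}(\sigma)$. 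Combining these,
\[
\mathbb{D}_f(\rho\Vert\sigma)\le \frac{\sup_I|f''|}{2\lambda_{\min}(\sigma)}\norm{\rho-\sigma}_1^2 ,
\]
which is the claim with a constant depending only on $f$, $\varepsilon$ (intrinsic to $f$) and $\sigma$.

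The main obstacle is ensuring the bracketed quantity in Lemma~\ref{theorem:generalreversepinsker} is well defined and small, that is, that $m<1<M$ strictly. This could fail only in the degenerate case $\rho=\sigma$, which is handled separately. A minor point is to state the constant $\sup_{I}|f''|$ explicitly, so that it depends on $f$ and the fixed $\varepsilon$ rather than on $\rho$.
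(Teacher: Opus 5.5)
Your proof is correct, and it takes a simpler route than the paper's. Both arguments start from Lemma~\ref{theorem:generalreversepinsker} and use Lemma~\ref{lemma:eigenvalues} to control $1-m$ and $M-1$. The difference is in how $f$ is expanded.

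The paper Taylor-expands $f$ to third order. It writes $\frac{f(m)}{1-m}$ and $\frac{f(M)}{M-1}$ as $\frac{f''(1)}{2}$ times $(1-m)$ and $(M-1)$, plus $f'''$-remainders of order $(1-m)^2$ and $(M-1)^2$. This yields a bound $\lesssim_{f,\sigma}\norm{\rho-\sigma}_1^2+\norm{\rho-\sigma}_1^3$. The cubic term is then absorbed using $\norm{\rho-\sigma}_1<\tfrac12$, which the paper gets from $\lambda_{\min}(\sigma)\le 1/|\mathcal{H}|$.

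You instead use the second-order Lagrange remainder $f(x)=\tfrac12 f''(\xi)(x-1)^2$. This bounds each ratio directly by $\tfrac12\sup_I f''$ times a linear quantity. As a result, there is no cubic term to absorb. The $C^3$ hypothesis also becomes superfluous, since $f\in\mathcal{F}$ is already $C^2$. And you obtain the explicit constant $\sup_I f''/(2\lambda_{\min}(\sigma))$; because $\varepsilon<1$, $I\subset[\tfrac12,\tfrac32]$, so this constant can even be taken to depend on $f$ and $\sigma$ only.

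What the paper's expansion buys is that it isolates the leading coefficient $f''(1)/2$. That would matter if one wanted the sharp local constant, but it is not needed for the $\lesssim_{f,\sigma}$ claim.

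You also check two things the paper's proof takes for granted. First, you verify the hypothesis $|\rho-\sigma|\le\rho+\sigma$ of Lemma~\ref{theorem:generalreversepinsker}, which does not hold automatically for non-commuting states. Second, you justify $m<1<M$ for $\rho\neq\sigma$ via $\Tr[\sigma(\mixedterm-\mathds{1})]=0$.
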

\begin{proof}
     Assume $\rho \neq \sigma$, otherwise the bound is trivial. It follows $m <1<M$. Our goal is now to bound the term on RHS of \eqref{eq:q-rev-pinsker} that depend on $f$, $m$, and $M$ in terms of $\Vert \rho - \sigma \Vert_{1}$. To do this, we first Taylor expand $f$ at $1$. Since $f(1) = f'(1) = 0$, we have 
    \begin{equation}
    \begin{aligned}
        \frac{f(m)}{1-m}  = \frac{f''(1)}{2}(1-m) - \frac{f'''(\xi_m)}{6} (1-m)^2
    \end{aligned}
    \end{equation}
    where $\xi_m \in (m,1)$. Similarly, we have 
    \begin{equation}
    \begin{aligned}
        \frac{f(M)}{M-1} = \frac{f''(1)}{2}(M-1) - \frac{f'''(\xi_M)}{6} (M-1)^2
    \end{aligned}
    \end{equation}
    where $\xi_M \in (1,M)$. 
    Thus we have 
    \begin{equation}\label{eq:local-reverse-pinsker-step-1}
    \begin{aligned}
        &\left|\frac{f(m)}{1-m} + \frac{f(M)}{M-1}\right|  \\ & \leq 
        \frac{f''(1)}{2}(M-m) + \left|\frac{f'''(\xi_M)}{6} \right| (M-1)^2 \\& \hspace{5mm} + \left| \frac{f'''(\xi_m)}{6}\right| (m-1)^2 \ .
    \end{aligned}
    \end{equation}
    
    We now bound each element of the sum in \eqref{eq:local-reverse-pinsker-step-1} in terms of $\Vert \rho - \sigma \Vert_{1}$. To do this, note that by definition of the $\sup$ norm,
    \begin{align}\label{equation:supnorm}
        \max\{M-1,1-m\} \leq \norm{\sigma^{-\frac{1}{2}} \rho \sigma^{-\frac{1}{2}} - \mathds{1}}_{\infty} \ . 
    \end{align}
    Thus, \begin{equation}\label{eq:local-reverse-pinsker-step-2}
    \begin{aligned}
    M-m = M-1 + 1-m &\leq 2\norm{\sigma^{-\frac{1}{2}} \rho \sigma^{-\frac{1}{2}} - \mathds{1}}_{\infty} \\
    & \lesssim_{\sigma} \norm{\rho-\sigma}_{\infty} \\ 
    & \lesssim \norm{\rho - \sigma}_1 \ , 
    \end{aligned}
    \end{equation}
    where we applied Proposition \ref{prop:revpinsker} and that all finite dimensional norms are equivalent. By the same argument,
    \begin{align} 
        \max\{M-1,1-m\} \lesssim_{f,\sigma} \norm{\rho - \sigma}_{1} \label{eq:local-reverse-pinsker-step-3} \ . 
    \end{align}

    
    Furthermore, since 
    $\norm{\rho - \sigma}_{\infty} < \frac{\varepsilon \cdot \lambda_{\min}(\sigma)}{2}$. So we have 
    \begin{equation}
        \xi_m,\xi_M \in (m,M) \subset [1-\frac{\varepsilon}{2}, 1 + \frac{\varepsilon}{2}]
    \end{equation}
    by applying Lemma \ref{lemma:eigenvalues} to \eqref{equation:supnorm}.
    Defining $C_{f,\sigma} \coloneqq \max_{x \in [1-\frac{\varepsilon}{2}, 1 + \frac{\varepsilon}{2}]} \vert f'''(x) \vert$, we conclude from the above
    \begin{equation*}
        \begin{aligned}
             \left|\frac{f(m)}{1-m} + \frac{f(M)}{M-1}\right| \lesssim_{f,\sigma} \norm{\rho - \sigma} + (M-1)^2 + (m-1)^2
        \end{aligned}
    \end{equation*}
    Combining this with Theorem \ref{theorem:generalreversepinsker} and then \eqref{eq:local-reverse-pinsker-step-2}, we have 
    \begin{align}
        \mathbb{D}_f(\rho \Vert \sigma) &\lesssim_{f,\sigma} \norm{\rho - \sigma}_1^{2} + (\norm{\rho - \sigma}_1)\left[(M-1)^2 + (m-1)^2 \right] \notag \\
        &\lesssim_{f,\sigma} \norm{\rho - \sigma}_1^{2} + \norm{\rho - \sigma}_1^3 \ . \label{eq:local-reverse-pinsker-step}
    \end{align}
    Since $\sigma>0$ and our Hilbert space dimension is at least two, we have that $\lambda_{\min}(\sigma) \leq  \frac{1}{\vert \cH \vert}$, and thus 
    \begin{equation*}
        \begin{aligned}
            \norm{\rho - \sigma}_1 & \leq \vert \cH \vert \cdot \norm{\rho - \sigma}_{\infty} < \frac{\vert \cH \vert}{2} \cdot \lambda_{\min}(\sigma) \leq \frac{1}{2} \ .
        \end{aligned}
    \end{equation*}
    So $\Vert \rho - \sigma \Vert^{3}_{1} < \Vert \rho - \sigma \Vert^{2}_{1}$,
    and the proposition follows.
\end{proof}

We also need to lower bound $\mathbb{D}_f$ by any $\chi^2_g$ divergence, which can be done as follows.  
\begin{prop}\label{lowerbound}
    Let $f \in \mathcal{F}_{\mathrm{Pin}}$, $g \in \mathcal{M}_{\mathrm{st}}$, and let $\mathbb{D}_f$ be a quantum $f$-divergence. Then for $\rho,\sigma \in \mathcal{D(H)}, \rho \ll \sigma$, we have 
    \begin{equation}
       \chi^2_{g}(\rho \Vert \sigma) \lesssim_{g,\sigma} \mathbb{D}_f(\rho  \Vert  \sigma)
    \end{equation}
    and we also have 
    \begin{equation}
        \norm{\rho - \sigma}_{1}^2 \lesssim_{g,\sigma} \chi^2_g(\rho \Vert \sigma) 
    \end{equation}
\end{prop}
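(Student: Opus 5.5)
The plan is to chain three comparisons:
\begin{equation*}
\chi^2_g(\rho\Vert\sigma) \lesssim_{g,\sigma} \chi^2_{\max}(\rho\Vert\sigma) \lesssim_{\sigma} \norm{\rho-\sigma}_1^2 \lesssim_{f} \mathbb{D}_f(\rho\Vert\sigma).
\end{equation*}
Separately, we show $\norm{\rho-\sigma}_1^2 \lesssim_{g,\sigma} \chi^2_g(\rho\Vert\sigma)$ by a spectral lower bound on the superoperator $\Omega^g_\sigma$. The last inequality in the chain is exactly the quantum Pinsker inequality in the definition of $\cF_{\mathrm{Pin}}$ (Corollary 62 of \cite{george2024divergence}), with constant $C_f^{-1}$. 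The first is the known ordering $\chi^2_g \leq \chi^2_{\max}$ from \cite{hiai2016contraction}, which even holds with constant $1$.

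For the middle step, write $X = \rho - \sigma$. Then $\chi^2_{\max}(\rho\Vert\sigma) = \Tr[\sigma^{-1}X^2] \leq \lambda_{\min}(\sigma)^{-1}\Tr[X^2]$, using $\sigma^{-1} \leq \lambda_{\min}(\sigma)^{-1}\mathds{1}$ and $X^2 \geq 0$. Equivalence of norms in finite dimension gives $\norm{X}_2^2 \leq \norm{X}_1^2$, so $\chi^2_{\max} \lesssim_\sigma \norm{\rho-\sigma}_1^2$. This proves the first claim. If $\sigma$ is not full rank but $\rho \ll \sigma$, we restrict everything to the support of $\sigma$; I would state this reduction briefly.

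For the second claim, the key is to diagonalize $\Omega^g_\sigma$. Write $\sigma = \sum_i \lambda_i |i\rangle\langle i|$ with all $\lambda_i > 0$. The superoperators $L_\sigma$ and $R_\sigma$ commute and act on the matrix units $|i\rangle\langle j|$ by the scalars $\lambda_i$ and $\lambda_j$. Hence $\Omega^g_\sigma(|i\rangle\langle j|) = \lambda_j^{-1} g(\lambda_i/\lambda_j)\,|i\rangle\langle j|$, and so
\begin{equation*}
\chi^2_g(\rho\Vert\sigma) = \sum_{i,j} \lambda_j^{-1} g(\lambda_i/\lambda_j)\, |X_{ij}|^2 .
\end{equation*}
Each coefficient is strictly positive because $g>0$ on $(0,\infty)$, as noted after the definition of $\mathcal{M}_{\mathrm{st}}$. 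There are finitely many coefficients and they depend only on $g$ and $\sigma$, so their minimum $c_{g,\sigma}>0$. Therefore $\chi^2_g(\rho\Vert\sigma) \geq c_{g,\sigma}\norm{X}_2^2 \gtrsim \norm{X}_1^2/|\cH|$, using $\norm{X}_1 \leq \sqrt{|\cH|}\,\norm{X}_2$.

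The main point needing care is this diagonalization and positivity step, i.e.\ that $\Omega^g_\sigma$ is bounded below by a positive constant. One must also check that taking $f(\Delta_{\sigma,\sigma})$ through functional calculus really gives the scalar $g(\lambda_i/\lambda_j)$ on matrix units. Everything else reduces to citing earlier facts and norm equivalence.
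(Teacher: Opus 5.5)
Your proposal is correct and follows the paper's proof. The paper writes $\chi^2_g(\rho\Vert\sigma)=\sum_{ij}\mu_j^{-1}g(\mu_i/\mu_j)\lvert X_{ij}\rvert^2$ and bounds the coefficients below by their minimum for the second claim; for the first claim it bounds them above by their maximum, then uses $\norm{X}_2\le\norm{X}_1$ and the Pinsker inequality from $\mathcal{F}_{\mathrm{Pin}}$. Your only deviation is routing the first claim through $\chi^2_g\le\chi^2_{\max}$ and $\Tr[\sigma^{-1}X^2]\le\lambda_{\min}(\sigma)^{-1}\norm{X}_2^2$ instead of the maximal coefficient, which amounts to the same bound and gives the explicit $g$-independent constant $1/(C_f\,\lambda_{\min}(\sigma))$.
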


This, combined with the fact that $\mathcal{E}^{\circ n}(\rho) \to \sigma$ uniformly in $\rho$, is sufficient to prove a generalization of the classical result. 

\begin{theorem}\label{thm:upper-bound}
     Let  $f \in \mathcal{F}_{\mathrm{Pin}}, g \in \mathcal{M}_{st}$. Let $\mathbb{D}_f$ be a quantum f-divergence. Let $\mathcal{E}$ be a primitive channel with fixed point $\pi$, and additionally assume $f$ is thrice continuously differentiable in an open neighborhood of $1$. Then we have 
    \begin{equation}
        \lim_{n \to \infty}\eta_{\mbb{D}_{f}}(\mathcal{E}^{\circ n}, \pi)^{1/n} \leq \eta_{\chi^2_{g}}(\mathcal{E},\pi) \ . 
    \end{equation}
\end{theorem}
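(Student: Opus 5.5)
The plan is to sandwich $\mathbb{D}_f$ between multiples of $\chi^2_g$ near $\pi$. After enough iterations every output $\mathcal{E}^{\circ n}(\rho)$ lies in that neighbourhood, uniformly in $\rho$, and the constants in the sandwich disappear after taking $n$-th roots. Fix $\rho\in\mathcal{D(H)}$ with $0<\mathbb{D}_f(\rho\Vert\pi)<\infty$, and write $\mathcal{E}^{\circ n}=\mathcal{E}^{\circ (n-k)}\circ\mathcal{E}^{\circ k}$ for a fixed $k$ to be chosen.

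First I choose $k$. Let $\varepsilon$ be the radius of the neighbourhood on which $f$ is $C^3$. By Lemma~\ref{lemma:uniform}, there is a $k$ (independent of $\rho$) such that $\norm{\mathcal{E}^{\circ m}(\rho)-\pi}_\infty<\varepsilon\lambda_{\min}(\pi)/2$ for all $m\ge k$ and all $\rho$. Set $\tau:=\mathcal{E}^{\circ k}(\rho)$ and $n'=n-k$.

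Next I bound the output. Since $\mathcal{E}^{\circ n'}(\tau)$ is within the required distance of $\pi$, Proposition~\ref{prop:revpinsker} gives $\mathbb{D}_f(\mathcal{E}^{\circ n}(\rho)\Vert\pi)\lesssim_{f,\pi}\norm{\mathcal{E}^{\circ n'}(\tau)-\pi}_1^2$. By the second bound of Proposition~\ref{lowerbound}, this is $\lesssim_{g,\pi}\chi^2_g(\mathcal{E}^{\circ n'}(\tau)\Vert\pi)$. The SDPI constant is submultiplicative under composition because $\pi$ is a fixed point, so iterating it gives $\chi^2_g(\mathcal{E}^{\circ n'}(\tau)\Vert\pi)\le\eta_{\chi^2_g}(\mathcal{E},\pi)^{n'}\chi^2_g(\tau\Vert\pi)$. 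The first bound of Proposition~\ref{lowerbound}, followed by the DPI for $\mathbb{D}_f$, gives $\chi^2_g(\tau\Vert\pi)\lesssim_{g,\pi}\mathbb{D}_f(\tau\Vert\pi)\le\mathbb{D}_f(\rho\Vert\pi)$.

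Chaining these steps gives
\begin{equation*}
\frac{\mathbb{D}_f(\mathcal{E}^{\circ n}(\rho)\Vert\pi)}{\mathbb{D}_f(\rho\Vert\pi)}\le C_{f,g,\pi}\,\eta_{\chi^2_g}(\mathcal{E},\pi)^{n-k},
\end{equation*}
with $C$ independent of $\rho$ and $n$. Taking the supremum over $\rho$, then the $n$-th root, and letting $n\to\infty$ gives the claim. The factors $C^{1/n}$ and $\eta^{-k/n}$ tend to $1$ as long as $\eta_{\chi^2_g}>0$. If $\eta_{\chi^2_g}=0$, the bound shows $\eta_{\mathbb{D}_f}(\mathcal{E}^{\circ n},\pi)=0$ for all $n>k$, so the claim is trivial.

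I also need the limit to exist. It does by Fekete's lemma, since $\eta_{\mathbb{D}_f}(\mathcal{E}^{\circ n},\pi)$ is submultiplicative in $n$ (again using that $\pi$ is fixed). Alternatively, one can state the result with $\limsup$.

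The main obstacle is the uniformity of the constants. The constants in Proposition~\ref{lowerbound} are stated to depend only on $g$ and $\sigma=\pi$, which is what is needed. In Proposition~\ref{prop:revpinsker} the constant depends on $f$, $\pi$ and $\varepsilon$ through $\max|f'''|$ on $[1-\varepsilon/2,1+\varepsilon/2]$. Since $\varepsilon$ is fixed once and for all, this constant is uniform. The other point needing care is that the SDPI for $\chi^2_g$ only applies to states with $0<\chi^2_g<\infty$. Since $\pi>0$, $\chi^2_g$ is always finite; when it vanishes, the bound in the chain holds trivially.
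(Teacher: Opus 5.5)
Your proposal is correct and follows essentially the paper's argument: the same sandwich via Proposition~\ref{prop:revpinsker} and both bounds of Proposition~\ref{lowerbound}, submultiplicativity of the $\chi^2_g$ SDPI constant under the fixed point $\pi$, and taking $n$-th roots. The split $\mathcal{E}^{\circ n}=\mathcal{E}^{\circ(n-k)}\circ\mathcal{E}^{\circ k}$ is unnecessary. Since $\chi^2_g\lesssim\mathbb{D}_f$ holds globally, the paper applies it directly to $\rho$ and gets a factor $\eta_{\chi^2_g}(\mathcal{E},\pi)^n$, which avoids your separate $\eta_{\chi^2_g}=0$ case. Your Fekete argument for existence of the limit and your neighbourhood radius $\varepsilon\lambda_{\min}(\pi)/2$ tidy up two points the paper's write-up glosses over: it never addresses existence of the limit, and it uses $\lambda_{\min}(\pi)/2$.
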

\begin{proof}
    Take sufficiently large $N$ such that $\norm{\mathcal{E}^n(\rho)-\pi}_{\infty} < \frac{\lambda_{\min}(\pi)}{2}$ for all $\rho$ (note that such an $N$ exists due to the uniform convergence established in Lemma \ref{lemma:uniform}). Then for $n \geq N$, we have the following 
    \begin{equation}
    \begin{aligned}
        \eta_f(\mathcal{E}^{\circ n},\pi) & = \sup_{\substack{\rho \in \mathcal{D(H)} \\  0 < \mathbb{D}_f(\rho\Vert \pi) < \infty}} \frac{\mathbb{D}_f(\mathcal{E}^{\circ n}(\rho) \Vert  \pi)}{\mathbb{D}_f(\rho \Vert \pi)} \\ 
        & \lesssim_{f,g,\pi} \sup_{\substack{\rho \in \mathcal{D(H)} \\   0 < \mathbb{D}_f(\rho\Vert \pi) < \infty} }\frac{\mathbb{D}_f(\mathcal{E}^{\circ n}(\rho) \Vert  \pi)}{\chi^2_g(\rho \Vert \pi)} \\ 
        & \lesssim_{f,\pi} \contractsuppi \frac{\norm{\mathcal{E}^{\circ n}\rho - \pi}_1^{2}}{\chi^2_g(\rho  \Vert  \pi)}  \\ 
        & \lesssim_{g,\pi} \sup_{\substack{\rho \in \mathcal{D(H)} \\  0 < \mathbb{D}_f(\rho\Vert \pi) < \infty}}\frac{\chi^2_g(\mathcal{E}^{\circ n}(\rho) \Vert  \pi)}{\chi^2_g(\rho \Vert \pi)}  \\
         & = \eta_{\chi^2_g}(\mathcal{E}^{\circ n}, \pi)   \\
         &  \leq   \eta_{\chi^2_g}(\mathcal{E}, \pi)^{n} ,
    \end{aligned}
    \end{equation}
    where the first inequality lower bounds the denominator with Proposition \ref{lowerbound}, the second uses Proposition \ref{prop:revpinsker} to bound the numerator along with the fact that $\pi$ is a fixed point of $\mathcal{E}$, the third inequality uses Proposition \ref{lowerbound} and the final inequality holds by sub-multiplicativity of the SDPI constants \cite[Prop. 64]{george2024divergence} and the fact that $\pi$ is a fixed point of $\mathcal{E}$.
    
    Since the constant are fixed and independent of $n$, taking the $\frac{1}{n}$ powers on both sides and sending $n \to \infty$ proves the claim. 
\end{proof}

\section{Tightness of the upper bound}
In this section, we establish the tightness condition given in Theorem \ref{thm:main-result}. Classically, a sufficient condition for when the upper bound is tight is for the Markov chain given by $W$ to be reversible \cite{makur2020comparison}. The proof relies on making use of the fact that the transition matrix of a reversible time-homogeneous Markov chain satisfies the detailed-balance equation, and the following well-known property of SDPI constants: 
\begin{prop}\cite{raginsky2016strong,polyanskiy2017strong}
    If $f$ is twice differentiable at unity with $f''(1)>0$, then we have 
    \begin{equation}\label{eq:cl-SDPI-constant-ordering}
        \eta_{\chi^2}(P_{Y|X},P_X) \leq \eta_{f}(P_{Y|X},P_X)
    \end{equation}
\end{prop}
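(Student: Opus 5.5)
The classical argument runs through the local expansion of $D_f$ around $P_X$. I would show $\eta_{\chi^2}(P_{Y|X},P_X) \leq \eta_f(P_{Y|X},P_X)$ by taking, for a fixed direction, a family of input distributions that approach $P_X$ along that direction. In this limit the ratio of $f$-divergences converges to the ratio of $\chi^2$-divergences, and the supremum defining $\eta_f$ dominates every such limit.

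\textbf{Step 1 (the ratio to approximate).} Fix a distribution $Q_X \neq P_X$ with $\chi^2(Q_X\Vert P_X) > 0$; this ratio is the quantity to approximate. Since both $\eta_\chi^2$ and the ratio are insensitive to restricting to $Q_X \ll P_X$ (otherwise $\chi^2 = \infty$, excluded), I assume $Q_X \ll P_X$. For $t \in (0,1]$ set $R_t = (1-t)P_X + tQ_X$. Then $R_t \ll P_X$ and $R_t \to P_X$.

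\textbf{Step 2 (second-order expansion).} Write $R_t(x)/P_X(x) = 1 + t h(x)$ with $h = (Q_X - P_X)/P_X$, which is bounded on the finite support of $P_X$. Since $f(1)=f'(1)=0$ and $f$ is twice differentiable at $1$, Taylor's theorem with Peano remainder gives
\begin{equation*}
f(1+th(x)) = \tfrac{f''(1)}{2}t^2h(x)^2 + o(t^2)
\end{equation*}
uniformly over the finitely many $x$. Hence
\begin{equation*}
D_f(R_t\Vert P_X) = \tfrac{f''(1)}{2}t^2\chi^2(Q_X\Vert P_X) + o(t^2).
\end{equation*}
The same expansion applies at the output. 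There $P_{Y|X}\circ R_t = (1-t)P_Y + t Q_Y$, with $Q_Y = P_{Y|X}\circ Q_X \ll P_Y$, so
\begin{equation*}
D_f(Q_Y^{(t)}\Vert P_Y) = \tfrac{f''(1)}{2}t^2\chi^2(Q_Y\Vert P_Y) + o(t^2).
\end{equation*}

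\textbf{Step 3 (limit and supremum).} Because $f''(1)>0$ and $\chi^2(Q_X\Vert P_X)>0$, we have $0 < D_f(R_t\Vert P_X) < \infty$ for small $t$. The ratio of $f$-divergences therefore tends to $\chi^2(Q_Y\Vert P_Y)/\chi^2(Q_X\Vert P_X)$ as $t\to 0$. Each such ratio is at most $\eta_f$, so the limit is at most $\eta_f$. Taking the supremum over $Q_X$ gives the claim.

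\textbf{Main obstacle.} The only delicate point is justifying the uniform $o(t^2)$ remainder and the positivity of the denominator. Finiteness of the alphabet and $f''(1)>0$ handle both. Outputs $y$ with $P_Y(y)=0$ contribute nothing, since $Q_Y \ll P_Y$.
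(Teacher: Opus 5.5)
Your argument is correct and is the classical instance of the perturbative proof the paper gives for Lemma~\ref{lemma:local} (the paper itself only cites this proposition): move along $(1-t)P_X+tQ_X$, expand the input and output divergences to order $t^2$, take the limit of the ratio, and then take the supremum. The one addition is that you derive the local $\chi^2$ behaviour from Taylor's theorem with Peano remainder rather than assuming it, and your use of $f'(1)=0$ is harmless because the linear term sums to zero when $Q_X\ll P_X$.
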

Therefore, one possible approach to generalizing the classical tightness conditions is to use the non-commutative generalizations of these two conditions. In this section, we briefly review these conditions, and show that they give a sufficient condition where the upper bound is tight.

To begin, we identify conditions that imply the SDPI constants inequality \eqref{eq:cl-SDPI-constant-ordering} holds. As was shown for HT divergences \cite{hirche2024quantum}, this can be seen as a manifestation of the fact that the local behaviour of an $f$-divergence is given by a $\chi^2$-divergence. We define this condition as follows: 
\begin{definition}
    Let $\mathbb{D}_f$ be a quantum $f$-divergence. We say that $\mathbb{D}_f$ is locally $\chi^2_{\kappa_f}$ if 
    \begin{equation}
        \lim_{\lambda \to 0}\frac{1}{\lambda^2}\mathbb{D}_f(\lambda \rho + (1-\lambda)\sigma  \Vert  \sigma) = \frac{f''(1)}{2}\chi^2_{\kappa_f}(\rho  \Vert  \sigma).
    \end{equation}
\end{definition}

We parameterize $\kappa_{f}$ in terms of $f$, since there are various non-commutative $\chi^2$-divergences and we do not know which coincides with the local behavior of the chosen quantum $f$-divergence $\mathbb{D}_f$ a priori. A direct generalization of the proof in \cite{hirche2024quantum} establishes the following.

\begin{lemma}\label{lemma:local}
    Suppose $\mathbb{D}_f$ is locally $\chi^2_{\kappa_f}$. Then $\eta_{\mbb{D}_{f}}(\mathcal{E},\sigma) \geq \eta_{\chi^2_{\kappa_f}}(\mathcal{E},\sigma)$.
\end{lemma}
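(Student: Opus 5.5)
We restrict the supremum defining $\eta_{\mbb{D}_f}(\mathcal{E},\sigma)$ to perturbations of $\sigma$ and pass to the limit. Fix any $\rho \in \mathcal{D(H)}$ with $0<\chi^2_{\kappa_f}(\rho\Vert\sigma)<\infty$; such $\rho$ are the only ones relevant for $\eta_{\chi^2_{\kappa_f}}(\mathcal{E},\sigma)$. For $\lambda\in(0,1)$ set $\rho_\lambda = \lambda\rho + (1-\lambda)\sigma$. By linearity of $\mathcal{E}$,
\begin{equation*}
\mathcal{E}(\rho_\lambda) = \lambda \mathcal{E}(\rho) + (1-\lambda)\mathcal{E}(\sigma),
\end{equation*}
so both $\rho_\lambda$ and $\mathcal{E}(\rho_\lambda)$ are segments of the form appearing in the definition of ``locally $\chi^2_{\kappa_f}$'', with reference states $\sigma$ and $\mathcal{E}(\sigma)$ respectively.

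We then apply the local hypothesis twice. It gives
\begin{align*}
\lim_{\lambda\to 0}\frac{1}{\lambda^2}\mathbb{D}_f(\rho_\lambda\Vert\sigma) &= \frac{f''(1)}{2}\chi^2_{\kappa_f}(\rho\Vert\sigma), \\
\lim_{\lambda\to 0}\frac{1}{\lambda^2}\mathbb{D}_f(\mathcal{E}(\rho_\lambda)\Vert\mathcal{E}(\sigma)) &= \frac{f''(1)}{2}\chi^2_{\kappa_f}(\mathcal{E}(\rho)\Vert\mathcal{E}(\sigma)).
\end{align*}
We need $f''(1)>0$, which we extract from $f\in\mathcal{F}$ together with nondegeneracy. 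If $f''(1)=0$, the local $\chi^2$ condition is vacuous and the statement must be read with that assumption.

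Since $f''(1)>0$ and $\chi^2_{\kappa_f}(\rho\Vert\sigma)>0$, the first limit is positive. Hence for all sufficiently small $\lambda$ we have $0<\mathbb{D}_f(\rho_\lambda\Vert\sigma)<\infty$, so $\rho_\lambda$ is admissible in the supremum. This gives, for small $\lambda$,
\begin{equation*}
\eta_{\mbb{D}_f}(\mathcal{E},\sigma) \ge \frac{\lambda^{-2}\mathbb{D}_f(\mathcal{E}(\rho_\lambda)\Vert\mathcal{E}(\sigma))}{\lambda^{-2}\mathbb{D}_f(\rho_\lambda\Vert\sigma)}.
\end{equation*}
Letting $\lambda\to0$, the factors $f''(1)/2$ cancel and we obtain
\begin{equation*}
\eta_{\mbb{D}_f}(\mathcal{E},\sigma)\ge \frac{\chi^2_{\kappa_f}(\mathcal{E}(\rho)\Vert\mathcal{E}(\sigma))}{\chi^2_{\kappa_f}(\rho\Vert\sigma)}.
\end{equation*}
Taking the supremum over admissible $\rho$ yields $\eta_{\mbb{D}_f}(\mathcal{E},\sigma) \geq \eta_{\chi^2_{\kappa_f}}(\mathcal{E},\sigma)$.

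The main obstacle is applying the locality hypothesis at the output pair $(\mathcal{E}(\rho),\mathcal{E}(\sigma))$. This needs the definition to hold for general reference states, including when $\mathcal{E}(\sigma)$ is not full rank, since $\chi^2_{\kappa_f}$ and $\Omega$ were defined only for $\sigma>0$. I would handle this in one of two ways. First, I would interpret locality as holding for all pairs with $\rho\ll\sigma$, noting that $\mathcal{E}(\rho)\ll\mathcal{E}(\sigma)$ whenever $\rho\ll\sigma$. Alternatively, I would restrict $\chi^2_{\kappa_f}$ to the support of $\mathcal{E}(\sigma)$. In the paper's application $\sigma=\pi$ is a full-rank fixed point, so $\mathcal{E}(\sigma)=\pi>0$ and the issue disappears.
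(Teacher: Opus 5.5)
Your proposal is correct and follows the paper's own argument: restrict to the segment $\lambda\rho+(1-\lambda)\sigma$, use linearity to get the same segment form at $(\mathcal{E}(\rho),\mathcal{E}(\sigma))$, apply the locality hypothesis at both pairs, let $\lambda\to 0$, and take the supremum. Your extra care over three points is welcome, since the paper passes over them silently: the admissibility of $\rho_\lambda$ in the supremum, the need for $f''(1)>0$ (which does hold for $f\in\mathcal{F}_{\mathrm{Pin}}$, the setting in which the lemma is used), and applying locality at the possibly non-full-rank reference state $\mathcal{E}(\sigma)$.
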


Next, we review a family of non-commutative generalizations of detailed-balance  \cite{temme2010chi}.

\begin{definition}[Quantum Detailed Balance] \cite{temme2010chi}
    Let $f \in \mathcal{M}_{st}$, $\sigma \in \mathcal{D}(\mathcal{H})$, and let $\mathcal{E}$ be a super-operator. Then we say that $\mathcal{E}$ satisfies $f$-detailed balance with respect to $\sigma$ if 
    \begin{equation}\label{eq:DB-equation}
        (\Omega_{\sigma}^{f})^{-1} \circ \mathcal{E}^{*} = \mathcal{E} \circ (\Omega_{\sigma}^{f})^{-1}
    \end{equation}
\end{definition}
\noindent It should be noted that some quantum-detailed balanced conditions are strong enough to imply the others. In particular, the following result was established in \cite{carlen2017gradient}. 
\begin{theorem}\cite{carlen2017gradient}
    Let $\mathcal{E}$ be a hermitian preserving super-operator, $\sigma >0$. If $\mathcal{E}$ satisfies detailed balance with respect to the constant function $f(t) =1$ and $\sigma$, then $\mathcal{E}$ satisfies $g$ detailed balance for all $g \in \mathcal{M}_{st}$ with respect to $\sigma$. 
\end{theorem}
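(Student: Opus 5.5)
The plan is to turn the $f\equiv 1$ detailed balance condition into two intertwining relations, one for $R_\sigma$ and one for $L_\sigma$. From these I will deduce that $\mathcal{E}^*$ commutes with the relative modular operator $\Delta_{\sigma,\sigma}$, and hence with every function of it. First note that for the constant function $1$ we have $\Omega^{1}_\sigma = R_\sigma^{-1}$, so $(\Omega^1_\sigma)^{-1}=R_\sigma$. The hypothesis \eqref{eq:DB-equation} therefore reads
\begin{equation*}
R_\sigma\circ\mathcal{E}^* = \mathcal{E}\circ R_\sigma, \qquad\text{i.e.}\qquad \mathcal{E}^*(X)\sigma = \mathcal{E}(X\sigma)\quad \forall X\in\mathcal{B}(\mathcal{H}).
\end{equation*}

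Next I use hermiticity preservation to get the left-multiplication version. Since $\mathcal{E}$ is hermitian preserving, so is $\mathcal{E}^*$, because $\langle Y,\mathcal{E}^*(X)^\dagger\rangle=\overline{\langle Y^\dagger,\mathcal{E}^*(X)\rangle}$ and this computation transfers the property. Taking the adjoint of the identity above and using $\sigma=\sigma^\dagger$ gives $\sigma\,\mathcal{E}^*(X^\dagger)=\mathcal{E}(\sigma X^\dagger)$ for all $X$. That is,
\begin{equation*}
L_\sigma\circ\mathcal{E}^* = \mathcal{E}\circ L_\sigma .
\end{equation*}
Because $\sigma>0$, both $L_\sigma$ and $R_\sigma$ are invertible, and we obtain $\mathcal{E} = R_\sigma \mathcal{E}^* R_\sigma^{-1} = L_\sigma\mathcal{E}^* L_\sigma^{-1}$. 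Rearranging, and using that $L_\sigma$ and $R_\sigma$ commute, yields
\begin{equation*}
\mathcal{E}^*\circ (L_\sigma R_\sigma^{-1}) = (L_\sigma R_\sigma^{-1})\circ \mathcal{E}^*, \qquad\text{i.e.}\qquad [\mathcal{E}^*,\Delta_{\sigma,\sigma}]=0.
\end{equation*}

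The step I expect to need the most care is passing from commutation with $\Delta_{\sigma,\sigma}$ to commutation with $g(\Delta_{\sigma,\sigma})$. I would handle it by finite-dimensional functional calculus. The operator $\Delta_{\sigma,\sigma}$ is positive with respect to the Hilbert–Schmidt inner product, and its finite spectrum is $\{\lambda_i/\lambda_j\}\subset(0,\infty)$. Hence $g(\Delta_{\sigma,\sigma})=p(\Delta_{\sigma,\sigma})$ for a polynomial $p$ interpolating $g$ on this spectrum, so $\mathcal{E}^*$ commutes with $g(\Delta_{\sigma,\sigma})$. Since $g>0$ on $(0,\infty)$, $g(\Delta_{\sigma,\sigma})$ is invertible, and $\mathcal{E}^*$ also commutes with its inverse.

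Finally, $R_\sigma$ commutes with $\Delta_{\sigma,\sigma}$, hence with $g(\Delta_{\sigma,\sigma})$, so $(\Omega^g_\sigma)^{-1}=g(\Delta_{\sigma,\sigma})^{-1}R_\sigma = R_\sigma g(\Delta_{\sigma,\sigma})^{-1}$. Then
\begin{equation*}
\mathcal{E}\circ(\Omega^g_\sigma)^{-1} = \mathcal{E} R_\sigma\, g(\Delta_{\sigma,\sigma})^{-1} = R_\sigma \mathcal{E}^* g(\Delta_{\sigma,\sigma})^{-1} = R_\sigma g(\Delta_{\sigma,\sigma})^{-1}\mathcal{E}^* = (\Omega^g_\sigma)^{-1}\circ\mathcal{E}^*,
\end{equation*}
which is exactly $g$-detailed balance with respect to $\sigma$.
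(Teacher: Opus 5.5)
Your argument is correct and complete. The paper gives no proof of this statement and simply imports it from \cite{carlen2017gradient}, so there is no in-text proof to compare against.

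Your route is essentially the standard Carlen--Maas argument that a real, GNS-symmetric map commutes with the modular operator, rewritten in the Schr\"odinger picture and in this paper's notation. You read $1$-detailed balance as the intertwining $R_\sigma\circ\mathcal{E}^*=\mathcal{E}\circ R_\sigma$. Hermiticity preservation of $\mathcal{E}$ and $\mathcal{E}^*$ then gives $L_\sigma\circ\mathcal{E}^*=\mathcal{E}\circ L_\sigma$, hence $[\mathcal{E}^*,\Delta_{\sigma,\sigma}]=0$. Interpolating polynomials carry this over to $g(\Delta_{\sigma,\sigma})^{-1}$.

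Writing the argument out makes the roles of the hypotheses visible, which the citation hides:
\begin{itemize}
\item Hermiticity preservation is used exactly once, to upgrade the $R_\sigma$-intertwining to an $L_\sigma$-intertwining.
\item None of the defining properties of $\mathcal{M}_{\mathrm{st}}$ is used (operator monotonicity, normalization, $xg(x^{-1})=g(x)$). You only need $g(\Delta_{\sigma,\sigma})$ to be invertible, i.e.\ $g$ nonzero on the spectrum of $\Delta_{\sigma,\sigma}$. So the implication holds for a much larger class of functions than stated.
\item The constant function $1$ is not itself in $\mathcal{M}_{\mathrm{st}}$, since it is not symmetry-inducing. ``$1$-detailed balance'' is therefore a formal extension of the paper's definition, and you handle it correctly by setting $\Omega^1_\sigma=R_\sigma^{-1}$.
\end{itemize}
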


We also use the following known eigenvalue characterization of $\chi^{2}_{f}$ SDPI constants.
\begin{lemma}\cite{hiai2016contraction, cao2019tensorization,george2025unified} \label{lemma:commute}
Let $f \in \mathcal{M}_{st}$, and $\mathcal{E}$ be a quantum channel, then we have 
    \begin{equation}
        \eta_{\chi^2_{f}}(\mathcal{E},\sigma) = \lambda_2((\Omega_{\sigma}^{f})^{-1} \circ \mathcal{E}^{*} \circ \Omega_{\sigma}^{f} \circ \mathcal{E} )   
    \end{equation}
    where $\lambda_2(\cdot)$ denotes the second largest eigenvalue of the operator. 
\end{lemma}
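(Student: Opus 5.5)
The plan is to read $\chi^2_f$ as a squared norm in a weighted Hilbert–Schmidt geometry and identify the SDPI constant as a Rayleigh quotient. As in its use for Theorem \ref{thm:main-result}, I take $\sigma>0$ to be a fixed point, $\mathcal{E}(\sigma)=\sigma$. Then the same inversion $\Omega_\sigma^f$ appears in numerator and denominator. Without this assumption one would replace the outer $\Omega^f_\sigma$ by $\Omega^f_{\mathcal{E}(\sigma)}$, and the argument below is unchanged.

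\textbf{Step 1: the weighted inner product.} Since $L_\sigma$ and $R_\sigma$ commute and are positive with respect to the Hilbert–Schmidt inner product, $\Omega_\sigma^f=R_\sigma^{-1}f(\Delta_{\sigma,\sigma})$ is a function of two commuting positive super-operators. Because $f>0$ on $(0,\infty)$, $\Omega_\sigma^f$ is self-adjoint and positive definite. Hence
\[
\langle X,Y\rangle_{\Omega}:=\langle X,\Omega_\sigma^f(Y)\rangle
\]
is an inner product on $\mathcal{B}(\mathcal{H})$. Writing $X=\rho-\sigma$, we get $\chi^2_f(\rho\Vert\sigma)=\langle X,X\rangle_\Omega$ and $\chi^2_f(\mathcal{E}(\rho)\Vert\sigma)=\langle \mathcal{E}X,\mathcal{E}X\rangle_\Omega=\langle X,T X\rangle_\Omega$, where $T:=(\Omega_\sigma^f)^{-1}\circ\mathcal{E}^*\circ\Omega_\sigma^f\circ\mathcal{E}$. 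One checks directly that $T$ is self-adjoint and positive semidefinite with respect to $\langle\cdot,\cdot\rangle_\Omega$. In particular, its spectrum is real and nonnegative, so $\lambda_2$ is meaningful.

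\textbf{Step 2: the trivial eigenvector and its complement.} Since $\Delta_{\sigma,\sigma}(\sigma)=\sigma$ and $f(1)=1$, we have $\Omega_\sigma^f(\sigma)=\mathds{1}$. Using unitality of $\mathcal{E}^*$ and $\mathcal{E}(\sigma)=\sigma$, it follows that $T(\sigma)=(\Omega^f_\sigma)^{-1}(\mathds{1})=\sigma$, so $\sigma$ is an eigenvector with eigenvalue $1$. The $\Omega$-orthogonal complement of $\sigma$ is exactly the set of traceless operators, because $\langle\sigma,X\rangle_\Omega=\langle \mathds{1},X\rangle=\Tr X$. As $T$ is $\Omega$-self-adjoint, it leaves this complement invariant. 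By Courant–Fischer, $\sup_{\Tr X=0,\,X\neq0}\langle X,TX\rangle_\Omega/\langle X,X\rangle_\Omega$ equals the largest eigenvalue of $T$ on the complement. This coincides with $\lambda_2(T)$ once we know $1$ is the top eigenvalue. That follows from the data-processing inequality for $\chi^2_f$: the quotient is at most $1$ on the complement, so every eigenvalue there is at most $1$.

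\textbf{Step 3: matching the supremum in the SDPI constant.} The admissible $X=\rho-\sigma$ are traceless Hermitian. Conversely, for any traceless Hermitian $X\neq 0$, the state $\rho=\sigma+tX$ is valid for small $t>0$ because $\sigma>0$, and the quotient is invariant under scaling $X\mapsto tX$. So $\eta_{\chi^2_f}(\mathcal{E},\sigma)$ is the supremum of the Rayleigh quotient over nonzero traceless Hermitian $X$.

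The main obstacle is the passage from complex to Hermitian $X$. Since $\mathcal{E}$, $\mathcal{E}^*$ and $\Omega_\sigma^f$ are Hermitian-preserving, they commute with $X\mapsto X^\dagger$. For $X=A+iB$ with $A,B$ Hermitian, both forms $\langle X,X\rangle_\Omega$ and $\langle X,TX\rangle_\Omega$ then split as the sum of the $A$ and $B$ contributions, because the cross terms cancel. Hence the complex Rayleigh quotient is a weighted average of the Hermitian ones and is bounded by their supremum. This shows the Hermitian supremum equals $\lambda_2(T)$ and completes the proof.
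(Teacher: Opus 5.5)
Your proof is correct and is the standard argument. The paper does not prove this lemma itself; it imports it from \cite{hiai2016contraction, cao2019tensorization, george2025unified}, where it rests on the same Rayleigh-quotient/Courant--Fischer characterization of $\chi^2_f$ in the $\Omega_\sigma^f$-weighted inner product. You also rightly make explicit the hypotheses the statement leaves implicit: $\sigma>0$, $\mathcal{E}(\sigma)=\sigma$ (as in its use with $\sigma=\pi$), and $\lambda_2$ counted with multiplicity.

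The only step worth spelling out is that $\Omega_\sigma^f$ is Hermitian-preserving, which is where the symmetry condition enters. It is equivalent to the multipliers $\mu_j^{-1}f(\mu_i/\mu_j)$ being symmetric in $i,j$, i.e.\ $f(x^{-1})=xf(x)$. This is the Temme et al.\ convention, satisfied by all of the paper's examples, rather than the literal $xf(x^{-1})=f(x)$ written in the paper's definition.
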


Now we give the proof for the tightness result. 
\begin{theorem}
    Let $f \in \mathcal{F} \cap \mathcal{F}_{\mathrm{Pin}}$. Let $\mathcal{E}$ be a primitive channel with fixed point $\pi$, and additionally assume $f$ is thrice continuously differentiable in an open neighborhood of $1$. If $\mathcal{E}$ satisfies $\kappa_f$ detailed balance with respect to $\pi$ and $\mathbb{D}_f$ is locally $\chi^2_{\kappa_f}$, then
    \begin{equation}
        \eta_f(\mathcal{E}^{\circ n}, \pi) \geq  \eta_{\chi^2_{\kappa_f}}(\mathcal{E},\pi)^n
    \end{equation}
\end{theorem}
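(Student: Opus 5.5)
Apply Lemma \ref{lemma:local} to the channel $\mathcal{E}^{\circ n}$ instead of $\mathcal{E}$. Since $\mathbb{D}_f$ is locally $\chi^2_{\kappa_f}$ and this property does not depend on the channel, Lemma \ref{lemma:local} gives
\begin{equation*}
\eta_f(\mathcal{E}^{\circ n},\pi) \geq \eta_{\chi^2_{\kappa_f}}(\mathcal{E}^{\circ n},\pi).
\end{equation*}
Sub-multiplicativity gives the opposite inequality $\eta_{\chi^2_{\kappa_f}}(\mathcal{E}^{\circ n},\pi)\le \eta_{\chi^2_{\kappa_f}}(\mathcal{E},\pi)^n$. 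So the whole task is to show that detailed balance forces equality here, i.e.
\begin{equation*}
\eta_{\chi^2_{\kappa_f}}(\mathcal{E}^{\circ n},\pi)=\eta_{\chi^2_{\kappa_f}}(\mathcal{E},\pi)^n.
\end{equation*}

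For this I would use the eigenvalue characterization of Lemma \ref{lemma:commute}. Write $\Omega=\Omega^{\kappa_f}_\pi$. The detailed balance equation \eqref{eq:DB-equation} reads $\Omega^{-1}\mathcal{E}^*=\mathcal{E}\Omega^{-1}$, which is equivalent to $\mathcal{E}^*\Omega=\Omega\mathcal{E}$. Iterating this gives $\Omega^{-1}(\mathcal{E}^{*})^{n}\Omega=\mathcal{E}^n$. Hence
\begin{equation*}
\Omega^{-1}(\mathcal{E}^{\circ n})^*\,\Omega\,\mathcal{E}^{\circ n}=\mathcal{E}^{2n}.
\end{equation*}
Next, equip $\mathcal{B}(\mathcal{H})$ with the weighted inner product $\langle X,Y\rangle_\Omega:=\langle X,\Omega Y\rangle$. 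This is a genuine inner product because $\Omega$ is positive definite: $\kappa_f>0$, $\pi>0$, and $L_\pi$, $R_\pi$ commute. The relation $\mathcal{E}^*\Omega=\Omega\mathcal{E}$ says exactly that $\mathcal{E}$ is self-adjoint with respect to $\langle\cdot,\cdot\rangle_\Omega$. So $\mathcal{E}$ has real eigenvalues $\mu_1,\dots,\mu_d$ with an $\Omega$-orthonormal eigenbasis, and $\mathcal{E}^{2n}$ has eigenvalues $\mu_i^{2n}$ on the same basis.

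Now identify the top eigenvalue. Since $\pi$ is a fixed point, $\mu_1=1$ with eigenvector $\pi$, and by primitivity this eigenvalue is simple. The operator $\Omega^{-1}\mathcal{E}^*\Omega\mathcal{E}=\mathcal{E}^2$ has nonnegative eigenvalues $\mu_i^2$, with largest eigenvalue $1$ belonging to $\pi$. Lemma \ref{lemma:commute} gives $\eta_{\chi^2_{\kappa_f}}(\mathcal{E},\pi)=\max_{i\ge2}\mu_i^2$. Applying the same lemma to $\mathcal{E}^{\circ n}$ gives $\eta_{\chi^2_{\kappa_f}}(\mathcal{E}^{\circ n},\pi)=\max_{i\ge 2}\mu_i^{2n}=(\max_{i\ge2}\mu_i^2)^n$. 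Combining this with the first step proves the claim.

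The main obstacle is being careful about what $\lambda_2$ means in Lemma \ref{lemma:commute}. The second largest eigenvalue has to be taken after removing the eigenvalue $1$ carried by $\pi$, which is simple by primitivity. I also need the same indexing for $\mathcal{E}$ and $\mathcal{E}^{\circ n}$, and the map $x\mapsto x^n$ must preserve the ordering of the eigenvalues; it does because $\mu_i^2\ge 0$. I would also check that $\mathcal{E}^{\circ n}$ is itself a channel with fixed point $\pi$ that satisfies $\kappa_f$-detailed balance (immediate by iterating), so that both lemmas apply to it verbatim.
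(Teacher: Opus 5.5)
Your proposal is correct and follows essentially the same route as the paper. You apply Lemma \ref{lemma:local} to $\mathcal{E}^{\circ n}$, use detailed balance to reduce $(\Omega_{\pi}^{\kappa_f})^{-1}\circ(\mathcal{E}^{\circ n})^{*}\circ\Omega_{\pi}^{\kappa_f}\circ\mathcal{E}^{\circ n}$ to $\mathcal{E}^{\circ 2n}$ in Lemma \ref{lemma:commute}, and use self-adjointness of $\mathcal{E}$ in a weighted inner product to get real eigenvalues, so squaring and then taking $n$-th powers preserves the ordering. Two minor points in your favour: your weight $\langle X,\Omega Y\rangle$ is the correct one for $\mathcal{E}$ (the paper's phrasing with $(\Omega_{\pi}^{\kappa_f})^{-1}$ really applies to $\mathcal{E}^{*}$), and you make explicit that $\lambda_2$ means the top eigenvalue on the complement of $\pi$, which the paper leaves implicit.
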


\begin{proof}
    By our assumption on locality and Lemma \ref{lemma:local}, $\eta_{f}(\mathcal{E}^{\circ n},\pi) \geq \eta_{\chi^2_{\kappa_f}}(\mathcal{E}^{\circ n},\pi)$. So it remains to show We have by the above lemma that $\eta_{f}(\mathcal{E}^{\circ n},\pi) \geq \eta_{\chi^2_{\kappa_f}}(\mathcal{E}^{\circ n},\pi)$, so it remains to show that 
    \begin{equation}
        \eta_{\chi^2_{\kappa_f}}(\mathcal{E}^{\circ n},\pi) = \eta_{\chi^2_{\kappa_f}}(\mathcal{E},\pi)^n
    \end{equation}
    when $\mathcal{E}$ satisfies $\kappa_f$ detailed balance with respect to $\pi$. 

    By Lemma \ref{lemma:commute}, we have the following chain of equalities 
    \begin{equation}
        \begin{aligned}
            \eta_{\chi^2_{\kappa_f}}(\mathcal{E}^{\circ n},\pi) & =  \lambda_2((\Omega_{\pi}^{\kappa_f})^{-1} \circ (\mathcal{E}^{\circ n})^{*} \circ \Omega_{\pi}^{\kappa_f} \circ \mathcal{E}^{\circ n} )\\
            & = \lambda_2(\mathcal{E}^{\circ 2n})\\
            & = \lambda_2(\mathcal{E}^{\circ 2})^n\\
            & = \{\lambda_2((\Omega_{\pi}^{\kappa_f})^{-1} \circ \mathcal{E}^{*} \circ \Omega_{\pi}^{\kappa_f} \circ \mathcal{E})\}^n\\
            & = \eta_{\chi^2_{\kappa_f}}
            (\mathcal{E},\pi)^{n}\\ 
        \end{aligned}
    \end{equation}
    where the first equality is from Lemma \ref{lemma:commute}, the second uses the detailed balance equation \eqref{eq:DB-equation} $n$ times to commute the $(\Omega_{\pi}^{\kappa_f})^{-1}$ operator over, the third equality is discussed below, the fourth again uses the detailed balance equation \eqref{eq:DB-equation}, and the last equality follows again from Lemma \ref{lemma:commute}. 
    
    The third equality is justified in the following manner. Since $\mathcal{E}$ is a quantum channel, we know that $\mathrm{spec}(\mathcal{E})$ is contained in the complex unit ball (see Chap. 6 of \cite{wolf2012quantum}). Since $\mathcal{E}$ satisfies $\kappa_f$ detailed balance with respect to $\pi$, it is self-adjoint with respect to the inner product defined by $(\Omega_{\pi}^{\kappa_f})^{-1}$, so $\mathrm{Spec(\mathcal{E})} \subset [-1,1]$. Therefore $\mathrm{Spec(\mathcal{E}^{\circ 2})} \subset [0,1]$. Thus the eigenvalue ordering is preserved when composing $n$ times and the third inequality holds. This completes the proof. 
\end{proof}

\section{Application to Various Quantum \texorpdfstring{$f$}{}-Divergences}\label{sec:applications}
In this section, we apply Theorem \ref{thm:main-result} to three families of quantum $f$-divergences for which the corresponding local $\chi^{2}_{g}$-divergences is already known: Petz, Hirche-Tomamichel (HT), and Matsumoto (which are maximal) $f$-divergences.

We begin by recalling the definitions of the different $f$-divergences. To that end, we introduce some preliminaries. As a Petz $f$-divergence requires $f$ to be operator convex to satisfy data processing \cite{hiai2011quantum,Hiai_2017}, we define the set
\begin{equation}
    \mathcal{F}_{\Petz} := \{ f \in \mathcal{F}:\text{ operator convex}\} \ . 
\end{equation}
The HT $f$-divergences are induced by the quantum hockey stick divergences, which we also recall.
\begin{definition}\cite{sharma2012strong}
    Let $\rho,\sigma \in \mathcal{D(H)}$ and $\gamma \geq 1$. Then the quantum hockey-stick divergence is defined as 
    \begin{equation}
        E_{\gamma}(\rho  \Vert  \sigma) = \Tr[(\rho - \gamma\sigma)_+]
    \end{equation}
    where $A_+$ denotes the positive part of the Hermitian operator $A$ in the Jordan decomposition. We set $E_\gamma$ to be infinity when $\rho$ is not absolutely continuous with respect to $\sigma$. 
\end{definition}

We now define the three families of quantum $f$-divergences.
\begin{definition}
    Let $f \in \cF$, $\mathcal{H}$ be a Hilbert space, and $\rho,\sigma \in \mathcal{D(H)}$.
    \begin{enumerate}[leftmargin=*]
        \item \cite{hirche2024quantum} The HT $f$-divergence is defined as 
        \begin{equation*}
            \hspace{-4mm} D_f(\rho  \Vert  \sigma) = \int_{1}^{\infty} f''(\gamma) E_{\gamma}(\rho \| \sigma) + \gamma^{-3} f''(\gamma^{-1}) E_{\gamma}(\sigma \| \rho) \, d\gamma 
        \end{equation*}
        when the integral is finite. Otherwise we set $D_f$ to infinity.
        \item \cite{petz1998contraction,matsumoto2015new} The Matsumoto $f$-divergence is defined as the following limit:
        \begin{align}
            \widehat{D}_{f}(\rho\Vert\sigma) = \lim_{\varepsilon \downarrow 0} \Tr[\sigma_{\ve} f(\sigma^{-1/2}_{\ve}\rho_{\ve}\sigma^{-1/2}_{\ve})] \ , 
        \end{align}
        where $\sigma_{\varepsilon} = \sigma + \varepsilon \mathds{1}$, $\rho_{\varepsilon}$ similarly, and $\mathds{1}$ is the identity on $\mathcal{H}$. 
        \item \cite{hiai2011quantum,Hiai_2017} If $f \in \cF_{\Petz}$, the Petz $f$-divergence is defined as the following limit:
        \begin{equation}
            \overline{D}_f(\rho\Vert\sigma) = \lim_{\varepsilon \downarrow 0}\Tr[\sigma_{\varepsilon}^{1/2}f(\Delta_{\rho_{\ve},\sigma_{\varepsilon}})\sigma_{\varepsilon}^{1/2}] \ .
        \end{equation}
        
    \end{enumerate}
\end{definition}

Next, we recall how these $f$-divergences behave locally.
\begin{prop}
    Let $\rho,\sigma \in \mathcal{D(H)}$ be full rank. For $f \in \cF$,
    \begin{enumerate}[leftmargin=*]
        \item \cite{hirche2024quantum} $\lim_{\lambda \downarrow 0} \frac{1}{\lambda^2}D_{f}(\lambda \rho + (1-\lambda)\sigma \Vert  \sigma) = \frac{f''(1)}{2}\chi^{2}_{\frac{\log(t)}{t-1}}(\rho \Vert \sigma)$
        \item \cite{matsumoto2015new} $\lim_{\lambda \downarrow 0} \frac{1}{\lambda^{2}}\widehat{D}_{f} (\lambda \rho + (1-\lambda)\sigma \Vert  \sigma) = \frac{f''(1)}{2}\chi^{2}_{\max}(\rho \Vert \sigma)$
        \item \cite{iyer2025quantum} if $f \in \cF_{\Petz}$ ,
    \begin{equation*}
        \lim_{\lambda \downarrow 0} \frac{1}{\lambda^2}\overline{D}_{f}(\lambda \rho + (1-\lambda)\sigma \Vert  \sigma) = \frac{f''(1)}{2} 
    \chi^2_{k_f}(\rho\Vert \sigma) \ ,
    \end{equation*}
    where $\kappa_f(x) = \frac{f(x) + xf(\frac{1}{x})}{f''(1)(x-1)^2}$.
    \end{enumerate}
\end{prop}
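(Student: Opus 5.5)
This proposition collects three known local expansions, so the plan is to verify each by a second-order perturbation of the relevant defining formula. Throughout, set $\rho_\lambda = \sigma + \lambda X$ with $X = \rho-\sigma$ traceless and Hermitian. Since $\sigma>0$, $\rho_\lambda$ is full rank for small $\lambda$, so the $\varepsilon$-regularizations in the Matsumoto and Petz definitions can be dropped. In each case we expand to order $\lambda^2$, use $f(1)=f'(1)=0$ to kill the lower-order terms, and identify the quadratic form as $\frac{f''(1)}{2}\langle X,\Omega_\sigma^{g} X\rangle$ for the appropriate $g$.

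\emph{Matsumoto.} Write $\sigma^{-1/2}\rho_\lambda\sigma^{-1/2} = \mathds{1} + \lambda Y$ with $Y=\sigma^{-1/2}X\sigma^{-1/2}$. Because $\mathds{1}$ commutes with $Y$, $f(\mathds{1}+\lambda Y)$ is just the scalar function applied to the spectrum of $Y$. Taylor's theorem then gives
\begin{equation*}
f(\mathds{1}+\lambda Y) = \tfrac{f''(1)}{2}\lambda^2 Y^2 + O(\lambda^3)
\end{equation*}
in operator norm. Hence
\begin{equation*}
\Tr[\sigma f(\cdot)] = \tfrac{f''(1)}{2}\lambda^2 \Tr[\sigma^{-1/2} X\sigma^{-1}X\sigma^{-1/2}]\cdot \sigma\text{-weighted},
\end{equation*}
i.e.\ $\lambda^2\frac{f''(1)}{2}\Tr[\sigma^{-1}X^2]$ after cyclicity. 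This equals $\frac{f''(1)}{2}\lambda^2\chi^2_{\max}(\rho\Vert\sigma)$, since $\Omega^{g}_\sigma$ with $g(t)=\frac{1+t}{2t}$ (or the corresponding member of $\mathcal{M}_{\mathrm{st}}$) gives the left/right inverse symmetrization. It is cleaner to check directly that $\langle X,\Omega X\rangle=\Tr[\sigma^{-1}X^2]$ for the standard function realizing $\chi^2_{\max}$.

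\emph{Petz.} Here $\Delta_{\rho_\lambda,\sigma} = L_{\sigma+\lambda X}R_{\sigma^{-1}}$, and $L_{\sigma+\lambda X}$ and $R_{\sigma^{-1}}$ commute. Work in the eigenbasis of $\sigma$ with eigenvalues $s_i$. The first-order perturbation of $L_{\rho_\lambda}$ is governed by divided differences, so we use the second-order Daleckii--Krein expansion
\begin{equation*}
\Tr[\sigma^{1/2} f(\Delta)\sigma^{1/2}] = \langle \sigma^{1/2}, f(\Delta)\sigma^{1/2}\rangle
\end{equation*}
in the doubled space, together with the joint diagonalization of $L_\sigma R_{\sigma^{-1}}$ with eigenvalues $s_i/s_j$. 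The constant term is zero and the linear term is $f'(1)\Tr X=0$. The quadratic term collects second divided differences of $f$ evaluated at the points $s_i/s_j$. The goal is to reduce it to
\begin{equation*}
\sum_{i,j}|X_{ij}|^2 \frac{f(s_i/s_j)s_j + f(s_j/s_i)s_i}{(s_i-s_j)^2},
\end{equation*}
which is exactly $\frac{f''(1)}{2}\langle X,\Omega^{\kappa_f}_\sigma X\rangle$, where $\Omega^{\kappa_f}_\sigma$ acts on matrix units by $\kappa_f(s_i/s_j)/s_j$. \textbf{This bookkeeping of divided differences is the main obstacle.} I would follow \cite{iyer2025quantum}, or alternatively differentiate $\lambda\mapsto \overline{D}_f$ twice at $0$ and use that $\overline{D}_f(\rho_\lambda\Vert\sigma)$ is the Petz quasi-entropy, which is real-analytic in $\lambda$.

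\emph{HT.} Use the integral representation. For small $\lambda$ and $\gamma>1$, $E_\gamma(\rho_\lambda\Vert\sigma)$ vanishes unless $\gamma \le 1+O(\lambda)$. Substituting $\gamma = 1+\lambda t$ and expanding $f''(\gamma)\approx f''(1)$ turns each of the two terms into
\begin{equation*}
\lambda^2 f''(1)\int_0^\infty \Tr[(X-t\sigma)_+]\,dt + o(\lambda^2),
\end{equation*}
and similarly with $-X$ in the second term. The remaining quantity $\int_0^\infty \Tr[(X-t\sigma)_+]+\Tr[(-X-t\sigma)_+]\,dt$ must then be identified with $\frac12\chi^2_{\log t/(t-1)}(\rho\Vert\sigma)$. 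I would do this as in \cite{hirche2024quantum}, through an integral representation of the BKM metric (whose standard function is $\frac{\log t}{t-1}$). Dominated convergence, justified by the boundedness of $f''$ near $1$, handles the limit interchange.
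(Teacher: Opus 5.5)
The paper does not prove this proposition; it imports the three expansions from \cite{hirche2024quantum}, \cite{matsumoto2015new} and \cite{iyer2025quantum}. Your direct perturbative route is therefore different from the paper's, and two of its pieces are sound: the Matsumoto computation, and the rescaling $\gamma=1+\lambda t$ in the HT integral. (With $f$ only $C^2$ you get $o(\lambda^2)$ rather than $O(\lambda^3)$, but that is all you need.)

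The Petz part, however, aims at the wrong quadratic form. The superoperator $\Omega^{\kappa_f}_\sigma$ multiplies $|i\rangle\langle j|$ by $\frac{1}{s_j}\kappa_f(s_i/s_j)=\frac{s_jf(s_i/s_j)+s_if(s_j/s_i)}{f''(1)(s_i-s_j)^2}$. So the sum $\sum_{i,j}|X_{ij}|^2\frac{s_jf(s_i/s_j)+s_if(s_j/s_i)}{(s_i-s_j)^2}$ that you propose to reduce to equals $f''(1)\chi^2_{\kappa_f}(\rho\Vert\sigma)$, not $\frac{f''(1)}{2}\chi^2_{\kappa_f}(\rho\Vert\sigma)$. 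The commuting case already exposes the slip. Read as a limit, the $i=j$ term of your sum is $f''(1)|X_{jj}|^2/s_j$, whereas the $\lambda^2$-coefficient of $\sum_j s_jf(1+\lambda X_{jj}/s_j)$ is $\frac{f''(1)}{2}X_{jj}^2/s_j$. So the reduction, as you state it, would not go through.

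The correct computation is short and removes what you call the main obstacle.

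\emph{Block reduction.} Take $|j\rangle$ to be eigenvectors of $\sigma$. Then $R_{\sigma^{-1}}$ acts on $|v\rangle\langle j|$ as the scalar $1/s_j$. Hence $\Delta_{\rho_\lambda,\sigma}$ leaves $\{|v\rangle\langle j|\}$ invariant and acts there as left multiplication by $\rho_\lambda/s_j$. This gives $\overline{D}_f(\rho_\lambda\Vert\sigma)=\sum_j s_j\langle j|f(\rho_\lambda/s_j)|j\rangle$.

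\emph{Second-order term.} For this diagonal entry the unperturbed eigenvalue is $1$. The second-order Daleckii--Krein term is therefore $\lambda^2\sum_l f^{[2]}(1,s_l/s_j,1)|X_{jl}|^2/s_j^2$. Since $f(1)=f'(1)=0$, we have $f^{[2]}(1,x,1)=f(x)/(x-1)^2$.

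\emph{Result.} The $\lambda^2$-coefficient is $\sum_{i,j}|X_{ij}|^2\frac{s_jf(s_i/s_j)}{(s_i-s_j)^2}$, where terms with $s_i=s_j$ read as $\frac{f''(1)}{2s_j}|X_{ij}|^2$. Symmetrizing in $i\leftrightarrow j$ gives exactly half of your sum, i.e.\ $\frac{f''(1)}{2}\chi^2_{\kappa_f}(\rho\Vert\sigma)$.

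For the HT part, note that the leftover quantity $\int_0^\infty(\Tr[(X-t\sigma)_+]+\Tr[(-X-t\sigma)_+])\,dt$ does not depend on $f$. Apply your same rescaling to $f(x)=x\log x-x+1$, for which the HT divergence is Umegaki's relative entropy \cite{hirche2024quantum}. Its second-order term is $\frac{\lambda^2}{2}\sum_{i,j}|X_{ij}|^2\frac{\log s_i-\log s_j}{s_i-s_j}=\frac{\lambda^2}{2}\chi^2_{\log t/(t-1)}(\rho\Vert\sigma)$. This identifies the leftover quantity as $\frac12\chi^2_{\log t/(t-1)}(\rho\Vert\sigma)$ without any separate integral representation of the BKM metric.
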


Thus, by applying Theorem \ref{thm:main-result} we obtain the following.
\begin{theorem}
    Let $f \in \mathcal{F}_{\mathrm{Pin}}, g \in \mathcal{M}_{st}$. Let $\mathcal{E}$ be a primitive channel with fixed point $\pi$, and additionally assume $f$ is thrice continuously differentiable in an open neighborhood of $1$. Then
    \begin{enumerate}[leftmargin=*]
        \item Rate for HT $f$-Divergences:
        \begin{equation}
            \lim_{n \to \infty}\eta_{D_{f}}(\mathcal{E}^{\circ n}, \pi)^{1/n} \leq \eta_{\chi^2_{g}}(\mathcal{E},\pi) \ , 
        \end{equation}
        and we have equality if $g(x) = \frac{log(x)}{x-1}$ and $\mathcal{E}$ satisfies $g$-detailed balance with respect to $\pi$.
        \item Rate for Matsumoto $f$-divergences:
        \begin{equation}
            \lim_{n \to \infty}\eta_{\widehat{D}_{f}}(\mathcal{E}^{\circ n}, \pi)^{1/n} \leq \eta_{\chi^2_{g}}(\mathcal{E},\pi) \ , 
        \end{equation}
        and we have equality if $g(x) = \frac{x+1}{2x}$ and $\mathcal{E}$ satisfies $g$-detailed balance with respect to $\pi$.
        \item Rate for Petz $f$-divergences: If it is also the case $f \in \cF_{\Petz}$,
        \begin{equation}
            \lim_{n \to \infty}\eta_{\overline{D}_{f}}(\mathcal{E}^{\circ n}, \pi)^{1/n} \leq \eta_{\chi^2_{g}}(\mathcal{E},\pi) \ . 
        \end{equation}
        and we have equality if $g(x) = \frac{f(x) + xf(\frac{1}{x})}{f''(1)(x-1)^2}$ and $\mathcal{E}$ satisfies $g$-detailed balance with respect to $\pi$. 
    \end{enumerate}
\end{theorem}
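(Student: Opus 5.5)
The plan is to derive the statement as a direct corollary of Theorem \ref{thm:main-result}, applied three times. Each time one checks that the family in question is a quantum $f$-divergence in the sense of the definition above, and that its local behaviour matches the claimed $g$. The upper bounds need only Theorem \ref{thm:upper-bound}. That theorem asks for three things: $f \in \mathcal{F}_{\mathrm{Pin}}$, $f$ thrice continuously differentiable near $1$, and $\mathbb{D}_f$ satisfying DPI and classical consistency. The first two are hypotheses of the statement. For the third I would cite the literature. HT $f$-divergences satisfy DPI and reduce to $D_f$ on commuting inputs by \cite{hirche2024quantum}, since $E_\gamma$ is monotone and classical on commuting states. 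Matsumoto $f$-divergences satisfy DPI by \cite{matsumoto2015new}, and $\sigma^{-1/2}\rho\sigma^{-1/2}$ reduces to $p/q$ when $[\rho,\sigma]=0$. Petz $f$-divergences satisfy DPI for operator convex $f$ by \cite{hiai2011quantum,Hiai_2017}, which is why the extra condition $f \in \mathcal{F}_{\Petz}$ appears in item 3. Given these facts, Corollary 62 of \cite{george2024divergence} supplies the quantum Pinsker bound, and Theorem \ref{thm:upper-bound} yields all three inequalities for arbitrary $g \in \mathcal{M}_{st}$.

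For the equality cases I would invoke the tightness theorem of the previous section together with the local-behaviour proposition, taking $\kappa_f = g$ with the stated choice of $g$. Item 1 has $\kappa_f(t) = \log(t)/(t-1)$. Item 3 has $\kappa_f(x) = \frac{f(x)+xf(1/x)}{f''(1)(x-1)^2}$. Item 2 reads the result of \cite{matsumoto2015new} as local $\chi^2_{\max}$. The tightness theorem gives $\eta(\mathcal{E}^{\circ n},\pi) \ge \eta_{\chi^2_g}(\mathcal{E},\pi)^n$ for every $n$, so the $n$-th roots are bounded below by $\eta_{\chi^2_g}(\mathcal{E},\pi)$. Combined with the upper bound, this gives equality in the limit.

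Several bookkeeping points need checking along the way.

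First, the proposed $g$ must lie in $\mathcal{M}_{st}$, since the detailed-balance definition and Lemma \ref{lemma:commute} require it. For $\log(t)/(t-1)$, normalization holds as a limit at $t=1$. Symmetry holds because $t\log(1/t)/(1/t-1)=\log t/(t-1)$. Operator monotone decreasing follows from the integral representation $\int_0^\infty (t+s)^{-1}(1+s)^{-1}\,ds$. For the Petz $\kappa_f$, these properties are part of the cited result of \cite{iyer2025quantum}.

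Second, one must confirm that $\chi^2_{\max}$ is exactly $\chi^2_g$ for the Matsumoto choice. Expanding $\Omega^g_\sigma = R_\sigma^{-1} g(L_\sigma R_{\sigma^{-1}})$ with $g(x)=\frac{x+1}{2x}$ gives $\Omega^g_\sigma(X) = \tfrac12(\sigma^{-1}X + X\sigma^{-1})$. Hence $\langle X,\Omega^g_\sigma X\rangle = \Tr[\sigma^{-1}X^2]$, which is $\chi^2_{\max}$.

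Third, the definition of \emph{locally $\chi^2$} uses $\lambda\to 0$ with full-rank $\sigma$. Here $\sigma=\pi>0$, and Lemma \ref{lemma:local} needs only this reference state, so no support issues arise.

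I expect the main obstacle to be the Matsumoto identification and the membership checks in $\mathcal{M}_{st}$, in particular for the Petz $\kappa_f$. The operator monotone decreasing property of that function is nontrivial and should be imported from \cite{iyer2025quantum} rather than reproved. Everything else is a direct application of the earlier theorems.
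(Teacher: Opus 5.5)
Your route is the same as the paper's. The paper's proof of this theorem is a single line: apply Theorem~\ref{thm:main-result} together with the local-behaviour proposition. Several of your bookkeeping steps are correct:
\begin{itemize}
\item the expansion $\Omega^g_\sigma(X)=\tfrac12(\sigma^{-1}X+X\sigma^{-1})$, which shows $\chi^2_g=\chi^2_{\max}$ for $g(x)=\frac{x+1}{2x}$;
\item the integral representation of $\log t/(t-1)$;
\item the squeeze of $\eta(\mathcal{E}^{\circ n},\pi)^{1/n}$ between the tightness bound and the upper bound.
\end{itemize}

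Two of your supporting claims are wrong, however. The first is the symmetry identity, which is false: $t\log(1/t)/(1/t-1)=t^2\log t/(t-1)$. The condition $xg(x^{-1})=g(x)$ printed in the paper's definition of $\mathcal{M}_{\mathrm{st}}$ cannot hold for any normalized decreasing $g$. For $x>1$ it would give $g(x)=xg(x^{-1})\geq x>1\geq g(x)$. That condition is the convention for operator monotone \emph{increasing} functions. The intended condition \cite{lesniewski1999monotone,temme2010chi} is $g(x^{-1})=x\,g(x)$, and with it each check takes one line:
\begin{itemize}
\item $\log(1/t)/(1/t-1)=t\log t/(t-1)$;
\item the Matsumoto choice $g(t)=\tfrac12(1+t^{-1})$, which you did not check, is clearly operator monotone decreasing, and $g(1/t)=\tfrac{1+t}{2}=t\,g(t)$;
\item $\kappa_f(1/x)=x\,\kappa_f(x)$ follows by direct substitution.
\end{itemize}

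The more substantive gap is your claim that the Matsumoto divergence satisfies DPI ``by \cite{matsumoto2015new}''. Matsumoto shows that $\Tr[\sigma f(\sigma^{-1/2}\rho\sigma^{-1/2})]$ coincides with the maximal $f$-divergence, which is automatically monotone, only when $f$ is operator convex. Item~2 assumes merely $f\in\cF_{\mathrm{Pin}}$. Without operator convexity the closed form need not be monotone. DPI under appending and tracing out a classical register would make $\rho\mapsto\Tr[\sigma f(\sigma^{-1/2}\rho\sigma^{-1/2})]$ convex for every $\sigma$. Convexity of the weighted trace $X\mapsto\Tr[\sigma f(X)]$ for all $\sigma\geq0$ is equivalent to operator convexity of $f$. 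So $\widehat D_f$ is then not a quantum $f$-divergence in the paper's sense, and Theorem~\ref{thm:main-result} cannot be invoked as a black box. The paper's one-line proof passes over the same point.

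There are two ways to close it:
\begin{itemize}
\item add $f\in\cF_{\Petz}$ to item~2; or
\item bypass DPI. The proof of Theorem~\ref{thm:upper-bound} uses the structure of $\mathbb{D}_f$ only through the Pinsker lower bound and the local reverse Pinsker bound, and both hold directly for the closed form.
\end{itemize}
For the second option:
\begin{itemize}
\item Matsumoto's spectral reverse test gives classical $p^*,q^*$ and a preparation channel $\Gamma$ with $\Gamma(p^*)=\rho$ and $\Gamma(q^*)=\sigma$. It satisfies $\widehat D_f(\rho\Vert\sigma)=D_f(p^*\Vert q^*)\geq C_f\norm{p^*-q^*}_1^2\geq C_f\norm{\rho-\sigma}_1^2$.
\item A second-order Taylor bound on $f$ near $1$ gives $\widehat D_f(\rho\Vert\sigma)\lesssim_{f,\sigma}\norm{\rho-\sigma}_1^2$ for $\rho$ close to $\sigma$.
\end{itemize}
The tightness direction, Lemma~\ref{lemma:local} plus the detailed-balance argument, never uses DPI.
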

In the above theorem, Item 1 generalizes a result of \cite{george2025unified}, all cases of Item 2 are new, and Item 3 improves the upper bounds given in \cite{george2024divergence}, which were given by the $\chi^2_{\max}$ divergence. In all cases, this is the first result establishing tightness in the non-commutative setting. This highlights the general applicability of Theorem \ref{thm:main-result}.



\bibliographystyle{IEEEtran}
\bibliography{References.bib}
\appendix
In this section we provide proofs omitted from the main text. 

\begin{proof}[Proof of Lemma \ref{lemma:uniform}]
    We first prove that there is a sufficiently large $n$ such that $\mathcal{E}^{n}$ strictly contracts $\rho-\sigma$ for any $\rho,\sigma \in \mathcal{D(H)}$. 
    \begin{align}
        \Vert \cE^{\circ n}(\rho - \sigma) \Vert_{\infty} &\leq \frac{1}{2}\Vert \cE^{\circ n}(\rho - \sigma) \Vert_{1} \\
        &\leq \frac{1}{2}\Vert \rho - \sigma \Vert_{1} \eta_{\text{TD}}(\cE^{\circ n}) \\
        &\leq \eta_{\text{TD}}(\cE^{\circ n}) \\ 
        &\leq 1 - \alpha(\cE^{\circ n}) \ , 
    \end{align}
    where the first inequality is \cite[Proposition 78]{george2025unified}, the second is by definition of contraction coefficient for the trace norm, and the fourth is \cite[Proposition 28]{george2025quantum}. Here, $\alpha(\mathcal{E)}$ denotes the quantum Doeblin coefficient of $\mathcal{E}$ - see \cite{george2025quantum} for a definition. Now, by \cite[Theorem 6.8, Item 4]{wolf2012quantum} and \cite[Theorem 6.9]{wolf2012quantum}, a primitive channel $\cE: \mathcal{B}(\mbb{C}^{d}) \to \mathcal{B}(\mbb{C}^{d})$ applied $n \geq d^{4}$ times has a full rank Choi operator. By \cite[Proposition 36]{george2025quantum} and the previous observation, for $n \geq d^{4}$, $\alpha(\cE^{\circ n}) > 0$. Thus, for $n \geq d^{4}$, $\Vert \cE^{\circ n}(\rho - \sigma) \Vert_{\infty} = C < 1$. Moreover, by the data-processing inequality for the Schatten $1$-norm, for all $n' \geq n$, $\Vert \cE^{\circ n'}(\rho - \sigma) \Vert_{\infty} \leq C<1$. Iteratively applying this construction and using Markovianity finishes the proof. 
\end{proof}

\begin{proof}[Proof of Lemma \ref{lemma:eigenvalues}]
    \begin{equation}
      \begin{aligned}
          \norm{\mixedterm - \mathds{1}}_{\infty} & \leq \norm{\sigma^{-\frac{1}{2}}(\rho - \sigma)\sigma^{-\frac{1}{2}}}_{\infty} \\ 
          & \leq \norm{\sigma^{-\frac{1}{2}}}_{\infty}^{2}\norm{\rho-\sigma}_{\infty}\\ 
          & \leq \frac{1}{\lambda_{\min}(\sigma)}\norm{\rho - \sigma}_{\infty}
      \end{aligned}
    \end{equation}
     where we use the sub-multiplicativity of the $\infty$-norm twice. 
\end{proof}

\begin{proof}[Proof of Proposition \ref{lowerbound}]
   Let $X = \rho-\sigma$ and $\mu_i$ be the eigenvalues of $\sigma$. We have 
    \begin{equation}
        \begin{aligned}
             \chi^2_g\left(\rho  \Vert  \sigma\right) & = \sum_{ij}\frac{1}{\mu_j}g\left(\frac{\mu_i}{\mu_j}\right)|X_{ij}|^2 \\ 
          &  \leq   \max_{ij} \Big\{\frac{1}{\mu_j}g\left(\frac{\mu_i}{\mu_j}\right) \Big\} \Tr[X^{\dagger}X] \\
          & \leq   \max_{ij} \Big\{\frac{1}{\mu_j}g\left(\frac{\mu_i}{\mu_j}\right) \Big\} \norm{\rho - \sigma}_{2}^2 \\ 
          & \lesssim \norm{\rho - \sigma}_{1}^{2} \\
          & \lesssim_f \mathbb{D}_f(\rho  \Vert  \sigma)
        \end{aligned}
    \end{equation}
    where the last step references corollary 62 of \cite{george2024divergence}, and the $\max$ we take yields a strictly positive number since $g(x)>0$ for $x>0$ as $g \in \mathcal{M}_{st}$.\\

    The proof of the second statement follows from 
    \begin{equation}
        \begin{aligned}
            \chi^2_g(\rho  \Vert  \sigma) & = \sum_{ij}\frac{1}{\mu_j}g\left(\frac{\mu_i}{\mu_j}\right)|X_{ij}|^2 \\ 
          &  \geq  \min_{ij} \Big\{\frac{1}{\mu_j}g\left(\frac{\mu_i}{\mu_j}\right) \Big\} \Tr[X^{\dagger}X] \\
          & =  \min_{ij} \Big\{\frac{1}{\mu_j}g\left(\frac{\mu_i}{\mu_j}\right) \Big\} \norm{\rho - \sigma}_{2}^2
        \end{aligned}
    \end{equation}
    and we can finish by applying the equivalence of norms in finite dimensions. 
\end{proof}

\begin{proof}[Proof of \ref{lemma:local}]
    We imitate the proof in \cite{hirche2024quantum} for the HT-divergences. By definition we have 
    \begin{equation}
        \begin{aligned}
        \frac{1}{\lambda^2}D_{f}(\mathcal{E}(\lambda\rho+(1-\lambda)\sigma \Vert \mathcal{E}(\sigma)) & \\ \leq \eta_f(\mathcal{E},\sigma)\frac{1}{\lambda^2}D_{f}(\lambda \rho + (1-\lambda)\sigma \Vert \sigma)
    \end{aligned}
    \end{equation}
    Sending $\lambda \to 0$ and making use of the locality result, we have 
    \begin{equation}
    \frac{\chi^2_{\kappa_f}(\mathcal{E}(\rho) \Vert \mathcal{E}(\sigma))}{\chi^2_{\kappa_f}(\rho \Vert \sigma)} \leq \eta_f(\mathcal{E},\sigma)
    \end{equation}
    Now taking the supremum over the left hand side we have the result. 
\end{proof}

\end{document}